\newtheorem{theorem}{Theorem}[section]
\newtheorem{proposition}[theorem]{Proposition}
\theoremstyle{definition}
\newtheorem{definition}[theorem]{Definition}
\newtheorem{example}[theorem]{Example}
\theoremstyle{remark}
\newtheorem{remark}[theorem]{Remark}
\numberwithin{equation}{section}
\begin{document}
\phantom{a}
\vspace{-1.5cm}
\title[Inner deformations of $A_\infty$-algebras]{Homotopy Cartan calculus and inner deformations of $A_\infty$-algebras} 

\author{ Alexey A. Sharapov }
\address{Physics Faculty, Tomsk State University, Lenin ave. 36, Tomsk 634050, Russia}
\email{sharapov@phys.tsu.ru}

\author{Evgeny D. Skvortsov}
\address{Service de Physique de l'Univers, Champs et Gravitation,  Universit\'e de Mons, 20 place du Parc, 7000 Mons, 
Belgium} 

\address{Lebedev Institute of Physics, 
Leninsky ave. 53, 119991 Moscow, Russia}
\email{evgeny.skvortsov@umons.ac.be}



\subjclass[2010]{Primary 16S80; Secondary 17A30}

\keywords{Deformation of $A_\infty$-algebras, noncommutative differential calculus, Hochschild and cyclic cohomology, higher-spin symmetry.}

\begin{abstract}
We consider inner deformations of families of $A_\infty$-algebras. With the help of noncommutative Cartan's calculus,  we prove the invariance of Hochschild (co)homology under inner deformations. The invariance also holds for cyclic cohomology classes that satisfy some additional conditions. Applications to dg-algebras and QFT problems are briefly discussed.
\end{abstract}

\maketitle
 
\section{Introduction}

 Strong homotopy algebras, mostly $L_\infty$, but $A_\infty$ too, have a long track record of applications in physics. Every gauge theory, for one, leads naturally to an $L_\infty$-algebra  within the Batalin--Vilkovisky formalism \cite{stasheff1998secret,jurvco2019algebras}. $A_\infty/L_\infty$-algebras are at the heart of open/closed string field theory \cite{Zwiebach:1992ie, Kajiura:2003ax, kajiura2006homotopy}. It is hardly possible to overestimate the organizing role of $L_\infty$-algebra language in the final solution of the deformation quantization problem \cite{Kontsevich:1997vb}.

 Our motivation for this work stems from the two recent incarnations of $L_\infty$-algebras: (i) there is a large class of integrable models \cite{Sharapov:2019vyd} relied on $L_\infty$-algebras; (ii) $L_\infty$-algebras occur as generalized global symmetries of certain conformal field  theories (CFT). In both cases the relevant $L_\infty$-algebras originate from $A_\infty$ ones through the symmetrization map (see Theorem \ref{Th10}, below). The study of observables and invariants in the models of (i) and (ii) calls for algebraic techniques we develop in the present paper. Let us elaborate on item (ii), which is, perhaps, the first occurrence of $L_\infty$ structures as a kind of {global} symmetry in physics. In quantum field theory (QFT), by an (infinitesimal) symmetry one usually means a Lie algebra action on a set of fields or operators which form, thereby, its module. One may regard these data as a very degenerate $L_\infty$-algebra with bilinear structure maps only. For example, this situation is realized for Chern--Simons matter vector models in the large-$N$ limit. What happens when one departs from this limit is that the Lie algebra and its module deform into an $L_\infty$-algebra, providing thus mathematical grounds \cite{Sharapov:2018kjz} to what was dubbed {\it slightly broken higher-spin symmetry} in \cite{Maldacena:2012sf}. One application of the present paper is  construction of various invariants, e.g. correlation functions, in such models \cite{Gerasimenko2022SlightlyBH}. 
 
 All $A_\infty$-algebras underlying the $L_\infty$-algebras of items (i) and (ii) share one important property: each of them is obtained by  {\it inner deformation} of a family of dg-algebras. The concept of inner deformation was introduced in our recent paper \cite{Sharapov:2018xxx}. In that paper, we show how to construct a minimal $A_\infty$-algebra starting from {\it any} one-parameter family of dg-algebras $A$. The resulting $A_\infty$-structure  extends the binary multiplication in $A$ by higher structure maps\footnote{Therefore it is not  just a minimal model of $A$.}. The construction is algorithmic and requires no extra data, hence the name.    From the physical viewpoint, adding a coherent set of higher products is equivalent to inclusion of interaction in a free gauge model. One may wonder what happens to the (co)homology of $A$, regarded as a graded associative algebra, upon inner deformations. In Sec. 7, we answer this question by showing that every inner deformation of an $A_\infty$-algebra extends to a deformation of its Hochschild (co)cycles (Theorem \ref{T61}).  In other words,  Hochschild (co)homology is invariant under inner deformations of families. Furthermore, the statement holds true for some classes of cyclic (co)homology, as is show in Sec. 8. 
 
 Our proofs make significant use of the noncommutative Cartan's calculus up to homotopy, as  developed in Refs. \cite{gel1989variant, Getzler93, nest1999cohomology, TamTsyn, dolgushev2011noncommutative}. Therefore, in the next Sec. 2-5, we recall relevant definitions and constructions. Guidance is also given, whenever necessary or helpful, on further reading. The original results of the paper are exposed in Sec. 6-8, where we consider deformation flows on families of $A_\infty$-algebras and their (co)homology. 
 
 \section{Gerstenhaber Algebras up to Homotopy}\label{s1}
 Throughout the paper  we work over a fixed ground field $k$ of characteristic zero. All unadorned tensor products $\otimes$ and $\mathrm{Hom}$'s are defined over $k$.  
When dealing with graded vector spaces and algebras, we tend to write down all sign factors explicitly. However, one can easily keep track of all signs using the Koszul rule: whenever two graded objects $a$ and $B$ are swapped, the factor of $(-1)^{|a||B|}$ appears.

 Given a pair of cochain complexes  $(C,\delta)$ and $(K,\partial)$, we say that a homomorphism $f : C^{\otimes n}\rightarrow K$ of degree $|f|\in \mathbb{Z}$ is a {\it $\delta\partial$-morphism} if 
$$
    \partial f=(-1)^{|f|}f\delta_n\,,
$$
where 
$$
     \delta_n(a_1\otimes \cdots\otimes a_n)=\sum_{k=1}^n (-1)^{|a_1|+\cdots+|a_{k-1}|}a_1\otimes\cdots\otimes \delta a_k\otimes \cdots\otimes a_n
$$
 is the standard differential in the $n$-th tensor power of the complex $C$.  Clearly, a $\delta\partial$-morphism of degree zero is just a cochain transformation.  
 
 A $\delta\partial$-morphism $f: C^{\otimes n}\rightarrow K$ is said to be {\it homotopic to zero} (in writing $f\simeq 0$) if there exists a homomorphism $g: C^{\otimes n}\rightarrow K$ of degree $|g|=|f|-1$ such that 
$$
     f=\partial g-(-1)^{|g|}g\delta_n\,.
$$
In the special case that $C=K$ and $\delta=\partial$ we speak about {\it $\delta$-morphisms}.

 
 \begin{definition}\label{D41} 
 A {\it Gerstenhaber algebra up to homotopy} is a cochain complex  $(G, \delta)$ composed of $k$-vector spaces and endowed with a pair of binary multiplication operations: a cup product $\cup$ of degree $1$ and a bracket $[\;\, ,\;]$ of degree $0$. In addition to $k$-bilinearity the multiplication operations are supposed to satisfy the following set of axioms: 
 \begin{itemize}
 \item [(hG1)] $\delta (a\cup b)=\delta a\cup b+(-1)^{|a|+1}a\cup \delta b$,
 \item [(hG2)] $\delta[a,b]=[\delta a, b]+(-1)^{|a|}[a,\delta b]$,
     \item [(hG3)] $(a\cup b)\cup c\simeq a\cup (b\cup c)\,,$
     \item [(hG4)] $a\cup b\simeq (-1)^{(|a|+1)(|b|+1)}b\cup a$\,,
     \item [(hG5)] $[a,b]\simeq -(-1)^{|a| |b|}[b,a]\,,$
     \item [(hG6)] $[a,[b,c]]\simeq [[a,b],c] +(-1)^{|a||b|}[b,[a,c]]\,,$
     \item [(hG7)] $[a,b\cup c]\simeq [a,b]\cup c+(-1)^{ |a|( |b|+1)} b\cup [a,c]$
 \end{itemize}
 for all homogeneous elements $a,b,c\in G$. 
 \end{definition}
 In what follows, we will abbreviate the verbose term `Gerstenhaber algebra up to homotopy' to {\it $hG$-algebra}.  
 
 The first two axioms imply that either multiplication operation defines a $\delta$-morphism from $G\otimes G$ to $G$. More precisely, the cup product gives the $\delta$-morphism $m(a,b)=(-1)^{|a|-1} a\cup b$, while the bracket defines a $\delta$-morphism as it is\footnote{Notice that the graded Leibniz rules  (hG1) and (hG7) are compatible with Koszul's sign convention if one thinks of $\cup$ as an object of degree $1$.}.  Then,  relations  (hG3,4) say that the cup product is associative and graded commutative up to homotopy. For example, unfolding the term `up to homotopy' for (hG3) we get 
 \begin{equation}\label{AuH}
   (a\cup b)\cup c-a\cup (b\cup c)
    \end{equation}
    $$
   =\delta A(a,b,c)+A(\delta a, b, c)+(-1)^{|a|}A(a, \delta b, c)+(-1)^{|a|+|b|}A(a,b,\delta c) 
$$
 for some homomorphism  $A: G\otimes G\otimes G \rightarrow G$ of degree $1$. Axioms (hG5,6) identify the bracket as a {Lie bracket} up to homotopy. Finally, relation (hG7) expresses compatibility between the two multiplication operations (the derivation property up to homotopy).

 \begin{example}
 Every cochain complex $(C,\delta)$ of $k$-vector spaces is an $hG$-algebra whose cup product and bracket are equal to zero. 
 \end{example}
 
 \begin{example}
 Every commutative dg-algebra $(A, \delta)$ gives rise to an $hG$-algebra on $G=A[1]$ w.r.t. the cup product $a\cup b= a\cdot b$ and the zero bracket\footnote{
 By definition, $A[m]$ is a graded vector space with $A[m]_n=A_{n+m}$.}. Similarly, one can think of each dg-Lie algebra $(L, \delta)$ as an $hG$-algebra with zero cup product. 
 \end{example}

\begin{example}\label{E44}
 If $\delta=0$, the above definition of an $hG$-algebra reduces to that of a {\it Gerstenhaber algebra}, see \cite{Gerst}. 
 This is defined by Axioms (hG3-7) with $\simeq$ replaced by the strict equality.  In the general case, both the cup product and the bracket pass through  the $\delta$-cohomology and induce a Gerstenhaber algebra structure on $H(G)$. 

A prototypical example of Gerstenhaber algebras is the algebra of poly\-vector fields $\mathcal{V}(M)$ on a smooth manifold $M$.  
If $\mathcal{V}^n(M)$ is the space of $n$-vector fields, then we set $G^{n-1}=\mathcal{V}^{n}(M)$. The role of cup product is played here  by the wedge product on polyvectors and the Lie bracket is given by the Schouten--Nijenhuis bracket. 
 
 \end{example}
 
In the examples above, all Axioms (hG1-7) were satisfied in the strong sense, and not just `up to homotopy'.  To come up with genuine examples of $hG$-algebras, we need to introduce the notions of  $A_\infty$- and $L_\infty$-algebras. These are defined as follows. 

Let $V=\bigoplus V_n$ be a $\mathbb{Z}$-graded vector space over $k$. Denote by $C_\bullet(V)$ the linear space of 
the reduced tensor algebra $$TV=\bigoplus_{n=1}^\infty V^{\otimes n}$$  and let $C^\bullet(V)$ denote the space of homomorphisms $\mathrm{Hom}(TV,V)$. The latter is known to carry the structure of a graded Lie algebra w.r.t. the {\it Gerstenhaber bracket}:
\begin{equation}\label{GB}
[A,B]=A\circ B-(-1)^{|A||B|}B\circ A\,,
\end{equation}
where 
\begin{equation}\label{comp}
(A\circ B)(a_1, a_2,\ldots, a_{m+n-1} )
\end{equation}
$$ =\sum_{k=0}^{n-1}(-1)^{|B|(|a_1|+\cdots +|a_k|)} A(a_1, \ldots, a_k, B(a_{k+1}, \ldots, a_{k+m}), \ldots,  a_{m+n-1})
$$
for all homogeneous $A\in C^n(V)$ and $B\in C^m(V)$. Here $|A|$ denotes the degree of $A$ as a homomorphism of graded vector spaces. It follows from the definition that $[A,A]=2A\circ A$ for every odd $A$.

\begin{definition}
 An  $A_{\infty}$-algebra structure on a graded vector space $V=\bigoplus V_n$ is given by a Maurer--Cartan element $m$
of the graded Lie algebra  $(C^\bullet(V), [\;\,, \; ])$, that is, 
\begin{equation}\label{MC}
m\circ m=0\,,\qquad |m|=1\,. 
\end{equation}
The pair $(V, m)$ is called an {\it $A_\infty$-algebra}.
\end{definition}

By definition, each  $A_\infty$-structure $m$ is given by an (infinite) sum 
\begin{equation}\label{mmm}
    m=m_1+m_2+m_3+\ldots 
\end{equation}
 of
multi-linear maps $m_n\in C^n(V)$ of degree $1$. It follows from (\ref{MC}) that the first structure map $m_1:V\rightarrow V$ squares to zero; and hence, it makes $V$ into a cochain complex with differential $\delta=m_1$. An $A_\infty$-algebra  is called {\it minimal} if $m_1=0$. The second structure map $m_2: V\otimes V\rightarrow V$  defines the multiplication operation $a\cup b=(-1)^{|a|-1}m_2(a,b)$, which is a cochain transformation of the complexes, i.e., $\delta $ differentiates the cup product by the Leibniz rule. The product obeys Rel. (\ref{AuH}) with $A(a,b,c)=(-1)^{|b|-1}m_3(a,b,c)$, meaning associativity up to homotopy.  Thus, we arrive at the following statement. 

\begin{proposition}\label{p46}
Let $(V,m)$ be an $A_\infty$-algebra with $m_2$ obeying the relation 
\begin{equation}\label{com}
    m_2(a,b)=-(-1)^{|a||b|}m_2(b,a)
\end{equation}
for all homogeneous $a,b\in V$. 
Then $V$ is a $hG$-algebra with 
$$
    \delta a=m_1(a)\,,\qquad a\cup b=(-1)^{|a|-1}m_2(a, b) \,, \quad \mbox{and}\quad [a,b]=0\,.
$$
\end{proposition}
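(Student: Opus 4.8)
The plan is to verify the seven axioms (hG1)--(hG7) of Definition \ref{D41} for the proposed data $\delta a=m_1(a)$, $a\cup b=(-1)^{|a|-1}m_2(a,b)$, $[a,b]=0$, one family at a time. Much of the work has in fact already been carried out in the discussion preceding the proposition, and the hypothesis \eqref{com} will enter only through the graded-commutativity axiom (hG4). I would begin by disposing of the four axioms governed by the bracket. Since $[\,\cdot\,,\,\cdot\,]$ is identically zero, (hG2), (hG5) and (hG6) all reduce to $0=0$, while in (hG7) the right-hand side $[a,b]\cup c+(-1)^{|a|(|b|+1)}b\cup[a,c]$ vanishes because $0\cup c=b\cup 0=0$ by $k$-bilinearity of the cup product. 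Thus these four axioms hold trivially and require no computation.

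For (hG1) and (hG3) I would invoke the identities obtained by expanding the Maurer--Cartan equation \eqref{MC} by arity. The arity-two component $m_1\circ m_2+m_2\circ m_1=0$ is precisely the graded Leibniz rule for $\delta=m_1$ with respect to $m_2$; inserting the desuspension sign in $a\cup b=(-1)^{|a|-1}m_2(a,b)$ turns it into (hG1). The arity-three component $m_1\circ m_3+m_2\circ m_2+m_3\circ m_1=0$ asserts that the associator $(a\cup b)\cup c-a\cup(b\cup c)$, which is a (degree-dependent) multiple of $(m_2\circ m_2)(a,b,c)$, equals a coboundary; rearranging the relation exhibits it in the form \eqref{AuH} with the stated homotopy $A(a,b,c)=(-1)^{|b|-1}m_3(a,b,c)$, which is (hG3). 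Crucially, neither of these two steps uses the hypothesis \eqref{com}: they hold for every $A_\infty$-algebra.

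The only axiom that genuinely uses the hypothesis is (hG4), and here the conclusion is in fact stronger than a homotopy statement: the relation holds on the nose. Substituting $a\cup b=(-1)^{|a|-1}m_2(a,b)$ and $b\cup a=(-1)^{|b|-1}m_2(b,a)$ into $(-1)^{(|a|+1)(|b|+1)}b\cup a$ and applying \eqref{com} in the form $m_2(b,a)=-(-1)^{|a||b|}m_2(a,b)$, the accumulated sign $-(-1)^{(|a|+1)(|b|+1)+|b|-1+|a||b|}$ collapses modulo $2$ to $(-1)^{|a|-1}$, giving exactly $a\cup b$. Hence $a\cup b=(-1)^{(|a|+1)(|b|+1)}b\cup a$ as a strict equality, so (hG4) holds with zero homotopy, and the proof is complete.

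The main obstacle throughout is sign bookkeeping rather than any conceptual difficulty. I expect the two delicate points to be (i) checking that the desuspension sign in $a\cup b=(-1)^{|a|-1}m_2(a,b)$ propagates consistently through the associator so that \eqref{AuH} genuinely holds with the indicated homotopy---this is exactly where one must treat $\cup$ as an object of degree one in the Koszul rule, as flagged in the footnote to Definition \ref{D41}---and (ii) verifying that the exponent in (hG4) really does reduce to $|a|-1$ after using \eqref{com}. Both are routine once the sign conventions are fixed, and no further structural input beyond $m\circ m=0$ and \eqref{com} is needed.
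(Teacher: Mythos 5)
Your proposal is correct and follows essentially the same route as the paper, which establishes (hG1) and (hG3) in the paragraph preceding the proposition via the arity-two and arity-three components of the Maurer--Cartan equation (with the same homotopy $A(a,b,c)=(-1)^{|b|-1}m_3(a,b,c)$), dismisses the bracket axioms as trivial, and lets \eqref{com} supply (hG4). Your sign check confirming that (hG4) holds strictly, not just up to homotopy, is accurate and consistent with the paper's subsequent remark relating \eqref{com} to $C_\infty$-algebras.
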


The skew-symmetry condition (\ref{com}) admits a coherent extension to all higher  structure maps  leading to 
the notion of a {\it $C_\infty$-algebra}, see e.g. \cite{kadeishvili2008cohomology}, \cite{MARKL1992141}. Therefore, each $C_\infty$-algebra generates  an $hG$-algebra with trivial bracket.

Like $A_\infty$-algebras, $L_\infty$-algebras can be expressed in terms of $n$-ary operations subject to a coherent set of quadratic relations.
\begin{definition}\label{DL}
 An { $L_\infty$-algebra} structure on a graded vector space $V=\bigoplus V_n$ is given by a sequence of homomorphisms 
 $l_n:  V^{\otimes n}\rightarrow V$, $n=1,2,\ldots$, of degree $1$ that are graded symmetric,
 $$
 l_n(a_{1},\ldots, a_{i},a_{i+1},\ldots, a_{n})= (-1)^{|a_i| |a_{i+1}|}l_n(a_1,\ldots,a_{i+1},a_i,\ldots,a_n) \,,
 $$
 and satisfy the `generalized Jacobi identities':
 $$
\sum_{{}^{i+n=m}}     \sum_{{}^{\sigma\in \mathrm{Sh}(i,n)}} \!\!\!\!\!(-1)^{\kappa}l_{n+1}(l_i(a_{\sigma(1)},\ldots,a_{\sigma(i)}), a_{\sigma(i+1)}, \ldots, a_{\sigma(i+n)})=0
$$
 for all $m=1,2,\ldots$ Here $(-1)^\kappa$ is the Koszul sign resulting from permutations of homogeneous vectors $a_1,\ldots, a_n\in V$ and the second sum is over all $(i, n)$-shuffles, that is, permutations $\sigma\in S_{i+n}$ satisfying  $\sigma (k) < \sigma(k+1)$ for $k\neq i$.
\end{definition}

It follows from the definition that $l_1$ is a differential of degree $1$, i.e., $l_1^2=0$. The second structure map  $l_2$ endows $V[-1]$ with the skew-symmetric bracket 
$[a,b]=(-1)^{|a|+1}l_2(a,b)$, which is differentiated by $l_1$ and obeys the Jacobi identity up to homotopy. Explicitly, 
$$
\begin{array}{l}
    [[a,b], c]+(-1)^{(|a|+1)(|b|+|c|)}[[b, c],a]+(-1)^{(|c|+1)(|a|+|b|)}[[c,a],b]\\[3mm]
=\delta J(a,b,c)+J(\delta a, b, c)+(-1)^{|a|}J(a,\delta b, c)+(-1)^{|a|+|b|}J(a,b,\delta c)\,,
\end{array}
$$
where $\delta a=l_1(a)$ and $J(a,b,c)=(-1)^{|b|+1}l_3(a,b,c)$.
As with $A_\infty$-algebras, one can combine all the structure maps $\{l_n\}$  into a single  non-homogeneous map $l=l_1+l_2+l_3+\cdots\in C^\bullet(V)$ of degree $1$.
\begin{proposition}\label{p49}
Every $L_\infty$-algebra $(V, l)$ gives $G=V[-1]$ the structure of an $hG$-algebra with
$$
   \delta a=l_1(a)\,,\qquad [a,b]=(-1)^{|a|+1}l_2(a,b)\,,\quad\mbox{and}\quad a\cup b=0
$$
for all $a,b\in V$.
\end{proposition}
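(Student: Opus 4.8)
The plan is to verify the seven axioms (hG1)--(hG7) of Definition \ref{D41} directly for the triple $(\delta,\cup,[\;,\,])$ given in the statement, reading each required identity off the generalized Jacobi identities of Definition \ref{DL}. The decisive simplification is that the cup product vanishes identically. Consequently the four axioms that involve $\cup$, namely (hG1), (hG3), (hG4) and (hG7), collapse to $0=0$: every term containing a $\cup$ is zero (for (hG7) one also uses $[a,0]=0$), and the homotopies required in (hG3), (hG4) and (hG7) may be taken to be the zero map. This reduces the whole proposition to the three axioms constraining the bracket, (hG2), (hG5) and (hG6).

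First I would dispose of (hG5). Since $l_2$ is graded symmetric on $V$, and the bracket is obtained from it after the degree shift $G=V[-1]$ together with the sign $(-1)^{|a|+1}$, the two sign factors combine so that $[a,b]=-(-1)^{|a||b|}[b,a]$ holds \emph{strictly}, not merely up to homotopy; this is a short sign computation using $|a|_{G}=|a|_{V}+1$. Next, for (hG2) I would invoke the generalized Jacobi identity for $m=2$. Its terms arise from the splittings $(i,n)=(2,0)$ and $(i,n)=(1,1)$, producing $l_1 l_2(a,b)$ together with the two shuffled contributions $l_2(l_1 a,b)$ and $l_2(l_1 b,a)$ carrying their Koszul signs. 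Rewriting this relation through $\delta=l_1$ and $[a,b]=(-1)^{|a|+1}l_2(a,b)$ yields exactly the graded Leibniz rule $\delta[a,b]=[\delta a,b]+(-1)^{|a|}[a,\delta b]$, so (hG2) also holds on the nose.

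The substance of the proof is (hG6), the homotopy Jacobi identity, and this is where I expect the real work to lie. Here I would use the generalized Jacobi identity for $m=3$, whose contributions split according to $(i,n)=(3,0)$, $(2,1)$ and $(1,2)$: the $(3,0)$ piece gives $l_1 l_3$, the $(1,2)$ piece gives $l_3(l_1\,\cdot)$ summed over the three $(1,2)$-shuffles, and the $(2,1)$ piece gives the iterated brackets $l_2(l_2\,\cdot)$ summed over the three $(2,1)$-shuffles. Setting $J(a,b,c)=(-1)^{|b|+1}l_3(a,b,c)$, the $(3,0)$ and $(1,2)$ groups assemble into the homotopy terms $\delta J(a,b,c)+J(\delta a,b,c)+(-1)^{|a|}J(a,\delta b,c)+(-1)^{|a|+|b|}J(a,b,\delta c)$ displayed just before the proposition, while the $(2,1)$ group becomes the cyclic sum of double brackets. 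The main obstacle is purely the sign bookkeeping: one must check that the Koszul signs $(-1)^\kappa$ attached to the three $(2,1)$-shuffles reproduce precisely the coefficients $1$, $(-1)^{(|a|+1)(|b|+|c|)}$ and $(-1)^{(|c|+1)(|a|+|b|)}$ of the cyclic form, and then observe that, via the skew-symmetry (hG5), this cyclic form is equivalent to the derivation form $[a,[b,c]]\simeq[[a,b],c]+(-1)^{|a||b|}[b,[a,c]]$ actually stated in (hG6). Once the dictionary between $V$-degrees and $G$-degrees is fixed and the shuffle signs are computed, (hG6) follows with homotopy built from $l_3$, completing the verification.
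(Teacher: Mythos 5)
Your proposal is correct and matches the paper's own (implicit) argument: the paper omits a formal proof but the preceding discussion establishes exactly your points — the vanishing cup product trivializes (hG1), (hG3), (hG4), (hG7), the $m=1,2$ identities give $\delta^2=0$ and the Leibniz rule, graded symmetry of $l_2$ under the shift gives (hG5), and the $m=3$ identity supplies the Jacobi homotopy with the same $J(a,b,c)=(-1)^{|b|+1}l_3(a,b,c)$. Your extra remark that the cyclic form of the homotopy Jacobi identity must be converted to the derivation form of (hG6) via (hG5) is a correct and worthwhile detail the paper glosses over.
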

The next theorem states that full symmetrization of an $A_\infty$-algebra gives an $L_\infty$-algebra,  see e.g. \cite[Th. 3.1]{Lada1994StronglyHL}, \cite{Lada_commutators}.  

\begin{theorem}\label{Th10} Every $A_\infty$-algebra  $(V,m)$ induces an $L_\infty$-structure on the space $V$ with
\begin{equation}\label{ln}
    l_n(a_1,\ldots, a_n)=\sum_{\sigma\in S_n}(-1)^{\kappa}m_n(a_{\sigma(1)},\ldots, a_{\sigma(n)})\qquad \forall n>0\,,
\end{equation}
$(-1)^\kappa$ being the Koszul sign of the permutation $\sigma$.
\end{theorem}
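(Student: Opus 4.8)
The plan is to lift the structure maps to their generating codifferentials on cofree coalgebras, where both the $A_\infty$ and the $L_\infty$ relations take the uniform shape ``squares to zero,'' and then to recognize the symmetrization (\ref{ln}) as the operation induced by the canonical inclusion of the cocommutative coalgebra into the tensor coalgebra. Concretely, I would perform the standard suspension $W=V[1]$, after which the two collections $\{m_n\}$ and $\{l_n\}$ are encoded by degree-one coderivations, and take for granted the two classical dictionaries: an $A_\infty$-structure on $V$ is the same as a degree-one coderivation $b$ of the reduced tensor coalgebra $T^c(W)=\bigoplus_{n\ge 1}W^{\otimes n}$ (deconcatenation coproduct $\Delta_T$) with $b^2=0$, whose corestriction $\mathrm{pr}_1^{T}\circ b\colon T^c(W)\to W$ records $\{m_n\}$; and an $L_\infty$-structure on $V$ is the same as a degree-one coderivation $B$ of the reduced cocommutative coalgebra $S^c(W)=\bigoplus_{n\ge1}S^n(W)$ (unshuffle coproduct $\Delta_S$) with $B^2=0$, whose corestriction records the graded-symmetric collection $\{l_n\}$. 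Crucially, since these coalgebras are cofree and conilpotent, each coderivation is determined uniquely by its corestriction.

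First I would introduce the graded symmetrization
$$
s\colon S^c(W)\to T^c(W),\qquad [w_1,\dots,w_n]\mapsto \sum_{\sigma\in S_n}(-1)^\kappa\, w_{\sigma(1)}\otimes\cdots\otimes w_{\sigma(n)},
$$
and record its two elementary properties: a short computation shows that $s$ intertwines the unshuffle coproduct $\Delta_S$ with the deconcatenation coproduct $\Delta_T$, so $s$ is a morphism of coalgebras, and, because $\mathrm{char}\,k=0$, the symmetrizer is injective. Next I define $B$ to be the unique coderivation of $S^c(W)$ whose corestriction is $\mathrm{pr}_1^{T}\circ b\circ s\colon S^c(W)\to W$; unravelling this map on $S^n(W)$ reproduces exactly the right-hand side of (\ref{ln}), with the sign $(-1)^\kappa$ of the statement being precisely the Koszul sign introduced by $s$. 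Thus the $\{l_n\}$ attached to $B$ are the maps in the theorem, and everything reduces to proving $B^2=0$.

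The heart of the argument is the intertwining identity $b\circ s=s\circ B$. I would observe that both $b\circ s$ and $s\circ B$ are coderivations \emph{along the coalgebra morphism} $s$, i.e. $s$-coderivations $S^c(W)\to T^c(W)$: this follows mechanically from $\Delta_T b=(b\otimes 1+1\otimes b)\Delta_T$, $\Delta_T s=(s\otimes s)\Delta_S$, and the dual identity for $B$. Since $T^c(W)$ is cofree, an $s$-coderivation into it is fixed by its corestriction to $W$, so it suffices to compare the two corestrictions. On one side this is $\mathrm{pr}_1^{T}\circ b\circ s$ by definition of corestriction; on the other, using $\mathrm{pr}_1^{T}\circ s=\mathrm{pr}_1^{S}$ (as $s$ is the identity on $W=S^1(W)$ and raises arity otherwise) together with the defining property of $B$, one gets $\mathrm{pr}_1^{T}\circ s\circ B=\mathrm{pr}_1^{S}\circ B=\mathrm{pr}_1^{T}\circ b\circ s$. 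The corestrictions agree, hence $b\circ s=s\circ B$. Granting this, one computes
$$
s\circ B^2=(b\circ s)\circ B=b\circ(s\circ B)=b\circ(b\circ s)=b^2\circ s=0,
$$
and injectivity of $s$ forces $B^2=0$, which is exactly the collection of generalized Jacobi identities for $\{l_n\}$.

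I expect the only genuine obstacle to be sign and convention bookkeeping rather than any conceptual difficulty: one must pin down the suspension and Koszul conventions so that (a) the two codifferential dictionaries hold in the paper's unshifted, degree-one normalization, and (b) the corestriction of $b\circ s$ on $S^n(W)$ matches $\sum_{\sigma}(-1)^\kappa m_n(a_{\sigma(1)},\dots,a_{\sigma(n)})$ on the nose. The claims that $s$ is a coalgebra map and that $s$-coderivations into a cofree coalgebra are corestriction-determined are routine but carry the whole reduction of $B^2=0$ to $b^2=0$, so I would state them as explicit lemmas. A fully equivalent but heavier alternative would substitute (\ref{ln}) directly into the generalized Jacobi identities of Definition \ref{DL} and reorganize the resulting double sum into shuffles of the $A_\infty$ relations encoded by (\ref{MC}); this avoids coalgebras entirely but replaces the single clean identity $b\circ s=s\circ B$ by a long sign-laden combinatorial rearrangement, so I would favor the coalgebraic route.
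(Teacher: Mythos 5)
Your argument is correct and complete in outline. The paper itself does not prove Theorem \ref{Th10}; it quotes the result from the literature (Lada--Markl), so there is no in-paper proof to compare against, but the coalgebraic machinery you invoke is exactly what the paper sets up in Sec.~\ref{S4}, where $SV$ is realized as the subcoalgebra of $(TV,\Delta)$ spanned by symmetric tensors and both $A_\infty$- and $L_\infty$-structures are identified with square-zero degree-one coderivations --- your symmetrizer $s$ is precisely that inclusion, and the identity $b\circ s=s\circ B$ together with injectivity of $s$ in characteristic zero correctly reduces $B^2=0$ to $b^2=0$. One bookkeeping caution: in the paper's conventions $V$ already carries degree-one structure maps and graded-\emph{symmetric} $l_n$, i.e., it already plays the role of the suspended space, so the extra shift $W=V[1]$ is unnecessary and, taken literally, would land you in the antisymmetric normalization rather than that of Definition~\ref{DL}; you flag this convention issue yourself, and once it is fixed the corestriction of $b\circ s$ on $S^n$ reproduces (\ref{ln}) on the nose.
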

For $n=2$, Eq. (\ref{ln}) takes the form 
\begin{equation}\label{LhG}
    l_2(a,b)=m_2(a,b)+(-1)^{|a||b|}m_2(b,a)\,.
\end{equation} Comparing this with Eq. (\ref{com}), we arrive at the following alternatives that are true for every $A_\infty$-algebra $(V, m)$ with $m_2\neq 0$:  
Either $V$ is an $hG$-algebra with a non-trivial cup product and the zero bracket (Prop. \ref{p46}) or $V[-1]$ is an $hG$-algebra with a 
non-trivial bracket and the cup product identically zero (Prop. \ref{p49}, Eq. (\ref{LhG})). 

Both the $A_\infty$- and $L_\infty$-algebras form categories, whose morphisms are most concisely defined in terms of coderivations, see Sec. \ref{S4}.

As the last example of $hG$-algebras we mention, without going into detail, a {\it homotopy Gerstenhaber algebra} on the Hochschild complex of an associative algebra
\cite{GV}, see Example \ref{E53} below. 

\section{Homotopy Cartan Calculus}

If $(K,\partial)$ is a cochain complex of $k$-vector spaces, then the space $\mathrm{End}(K)$ has the structure of a dg-algebra w.r.t. the composition of endomorphisms and the differential $\hat \partial$
defined by
$$
    \hat \partial A=[\partial, A] \qquad \forall A\in \mathrm{End}(K)\,.
$$

\begin{definition}  \label{D51}
 A {\it homotopy Cartan precalculus} over an $hG$-algebra $(G, \delta, \cup, [\;\,,\;])$ is a cochain complex of vector  spaces $(K,\partial)$ together with a pair of $k$-linear homomorphisms 
   \begin{equation}\label{iL}
       i: G\rightarrow \mathrm{End}(K)\,,\qquad L: G\rightarrow \mathrm{End}(K)
\end{equation}
   of degree $1$ and $0$, respectively, with the following properties:
\begin{itemize}
    \item[(hC1)] $ i_{\delta a}=-\hat\partial i_a$,
    \item[(hC2)] $L_{\delta a}=\hat\partial L_a$,
    \item[(hC3)] $i_ai_b\simeq i_{a\cup b}$,
    \item[(hC4)]  $[L_a,L_b]\simeq L_{[a,b]}$,
  \item[(hC5)] $[i_a, L_b]\simeq i_{[a,b]}$,
  \item[(hC6)]  $L_a i_b- (-1)^{ |a|} i_aL_b\simeq  L_{a\cup b}$
\end{itemize}
for all $a,b\in G$.     The first two conditions say that the maps (\ref{iL}) are $\delta\hat \partial$-morphisms and $L$ is a cochain transformation. 
 \end{definition}
The name `homotopy Cartan precalculus' comes from a special, albeit important, example where 
both the differentials $\delta$ and $\partial$ are zero (no homotopy). 
In that case, the remaining nontrivial Axioms (hC3-6) take the form of equalities satisfied by the operators of Lie derivative and 
interior product in the algebra of differential forms $\Lambda(M)$ on a smooth manifold $M$; in so doing, $G$ is identified with the Gerstenhaber algebra of polyvector fields $\mathcal{V}(M)$ (Example \ref{E44}), while $K^{-n}=\Lambda^n(M)$. One more important ingredient of the conventional Cartan's calculus is the exterior  differential $d$, which  enters the `Cartan magic formula'. 
Its natural extension to the  `up-to-homotopy' setting leads us to the following definition. 

\begin{definition}\label{D22}
 A homotopy Cartan precalculus is called {\it calculus}, if there exists an operator $d: K\rightarrow K$ of degree $-1$ such that 
 \begin{itemize}
 \item[(hC7)] $d\partial=-\partial d$\,,
     \item[(hC8)] $d^2\simeq 0$,
     \item[(hC9)] $d i_a+(-1)^{|a|} i_a d\simeq -L_a$.
 \end{itemize}
\end{definition}
Axiom (hC7) says that $d$ is a $\partial$-morphism of $K$.  Together Axioms (hC7-9) and (hC1) give one more familiar relation
\begin{equation}\label{dL}
    dL_a-(-1)^{|a|} L_a d\simeq 0\,.
\end{equation}
As a result, all the  maps $i$, $L$, and $d$ pass through $\delta$- and $\hat\partial$-cohomology, inducing the conventional  Cartan's calculus relations in $H(\mathrm{End}(K))$ for the Gerstenhaber algebra $H(G)$.  

In the sequel, we abbreviate `homotopy Cartan (pre)calculus' to $hC$-(pre)calculus. 

\begin{example}\label{E53}
Let $A$ be a graded associative algebra and  let $V=A[1]$.  The spaces $C_\bullet (V)$ and $C^\bullet(V)$ are 
identified, respectively, with the  spaces of Hochschild chains and cochains of the algebra $A$ with coefficients in itself. The corresponding Hochschild differentials are given by 
 $$
     (\delta D)(a_0,\ldots, a_n)
$$
$$
=(-1)^{|D|+\bar a_0+\cdots +\bar a_{n-1}}D(a_{0}, \ldots, {a_{n-1}})a_n+(-1)^{(|D|+1)\bar a_0} a_0D(a_1,\ldots, a_n)
$$
$$
-\sum_{i=0}^{n-1} (-1)^{|D|+\bar a_0+\cdots +\bar a_i}D(a_0,\ldots, a_{i-1}, a_{i}a_{i+1},a_{i+2},\ldots, a_n)
$$
for all $D\in C^{n}(V)$ and 
$$
    \partial (a_0\otimes a_1\otimes \cdots \otimes a_n)
$$
$$
=(-1)^{\bar a_n(\bar a_0+\cdots +\bar a_n+1)}a_na_0\otimes a_1\otimes \cdots\otimes a_{n-1}
$$
$$
+\sum_{i=0}^{n-1} (-1)^{\bar a_0+\cdots+\bar a_i} a_0\otimes \cdots\otimes a_{i-1}\otimes a_ia_{i+1}\otimes a_{i+2}\otimes \cdots\otimes  a_n\,.
$$
Here $\bar a=|a|-1$ is the degree of $a\in A$ as an element of $V=A[1]$. Notice that $\delta D=[m_2, D]$, where $m_2(a,b)=(-1)^{\bar a} ab$. 

As was first observed in \cite{Gerst}, the  cochain complex $C^\bullet(V)$ carries the structure of an $hG$-algebra for the Gerstenhaber bracket (\ref{GB}) and the cup product 
 $$
     (D\cup E)(a_1,\ldots, a_{k+l})
 $$
 $$
 =(-1)^{(|E|+1)(\bar a_1+\cdots+\bar a_k)}D(a_1,\ldots, a_k)E(a_{k+1},\ldots, a_{k+l})
 $$
 for all $D\in C^k(V)$ and $E\in C^l(V)$. 

Natural action of Hochschild cochains on chains gives rise to an $hC$-precalculus with 
 \begin{equation}\label{iD}
    i_D (a_0\otimes  \cdots \otimes a_n) 
 \end{equation}
 $$
 = (-1)^{(\bar a_0+\cdots+\bar a_n)(\bar a_{n-k+1}+\cdots+ \bar a_n)+|D|-1} D(a_{n-k+1},\ldots,a_n ) a_0\otimes a_{1}\otimes \cdots\otimes  a_{n-k}
 $$
 and 
 \begin{equation}\label{LE}
     L_E (a_0\otimes \cdots \otimes a_n)
 \end{equation}
 $$
 =\sum_{i=0}^{n-k} (-1)^{|E|(\bar a_0+\cdots+\bar a_i)}a_0\otimes \cdots\otimes a_i\otimes E(a_{i+1},\ldots, a_{i+k})\otimes a_{i+k+1}\otimes \cdots\otimes a_n
 $$
 $$
 +\sum^{n}_{i=n-k} (-1)^{(\bar a_0+\cdots +\bar a_n+1)(\bar a_{i+1}+\cdots + \bar a_n)} E(a_{i+1},\ldots, a_{n}, a_0, \ldots, a_{i+k-n-1})
 $$
 $$
 \otimes a_{i+k-n}\otimes \cdots \otimes a_i
 $$
 for all $D,E\in C^k(V)$. Direct  verification shows that the operators (\ref{iD}) and (\ref{LE}) obey Axioms (hC3,4)  with strict equalities.
 
 
 For a unital algebra $A$, one can also define the normalized Hochschild complexes  $\bar C^\bullet(V)$ and $\bar C_\bullet(V)$ together with  the  Connes--Rinehart boundary operator 
 \begin{equation}\label{B}
 \begin{array}{c}
     B(a_0\otimes \cdots \otimes a_n)\\[3mm]
     \displaystyle =\sum_{i=0}^n (-1)^{(\bar a_0+\cdots+\bar a_i)(\bar a_{i+1}+\cdots+\bar a_n)}e\otimes a_{i+1}\otimes \cdots\otimes a_n
     \otimes a_0\otimes \cdots \otimes a_i\,.
     \end{array}
 \end{equation}
Here $e$ is the unit of $A$, $a_0\in A$, and  $a_1,\ldots,a_n\in \bar A=A/ke$.  This makes the above $hC$-pre\-calculus into an $hC$-calculus with differential $d=B$. 
Notice that $d^2=0$ and the l.h.s. of Rel. (\ref{dL}) vanishes in the usual sense.  For more detail, see  \cite{gel1989variant}, \cite{Cuntz}, \cite[Sec. 3]{DomKow}. 
 
\end{example}

\section{$hG$-algebra structure and $hC$-calculus on the Hochschild complex of an $A_\infty$-algebra}\label{S3}
In this section, we extend Example \ref{E53} to an arbitrary $A_\infty$-algebra $(V,m)$. First of all, the Gerstenhaber bracket (\ref{GB}) makes $C^\bullet(V)$ into a dg-Lie algebra with the differential  given by the adjoint action of $m$, i.e., $\delta A=[m,A]$. 
To define a cup product on $C^\bullet(V)$, we need the construction of  {\it braces} \cite{Kadeishvili98}, \cite{Getzler93}, \cite{GV}. 

\begin{definition}\label{D61}
 For homogeneous elements  $A, A_1,\ldots, A_m\in C^\bullet(V)$ and  $a_1,\ldots, a_n\in V$, define the braces $A\{A_1,\ldots, A_m\}\in C^\bullet(V)$ by the formula
$$
\begin{array}{c}
  A\{A_1,\ldots, A_m\}(a_1,\ldots, a_n) \\[2mm] \displaystyle
    =\sum_{0\leq k_1\leq \cdots\leq k_m\leq n} (-1)^\kappa A(a_1,\ldots, a_{k_1}, A_1(a_{k_1+1}, \ldots), \\[2mm]  \hspace{3cm}\ldots,a_{k_m}, A_{m}(a_{k_m+1},\ldots),\ldots,a_n)\,,
\end{array}
$$
where $\kappa =\sum_{i=1}^m|A_i|\sum_{j=1}^{k_i}|a_j|$. It is assumed that $A\{\varnothing\}=A$. 
\end{definition}
It follows from the definition that 
$A\{A_1\}=A\circ A_1$ and one may think of the braces as a generalization of the composition product (\ref{comp}).
The braces obey the {\it higher pre-Jacobi identities} \cite{GV}:
\begin{equation}\label{pre-J}
\begin{array}{c}
    A\{A_1,\ldots, A_m\}\{B_1,\ldots, B_n\}\\[2mm]
\displaystyle =\sum_{\mbox{\small $AB$-shuffles}} (-1)^\kappa A\{B_1,\ldots, B_{k_1},A_1\{B_{k_1+1}, \ldots\},\\[2mm]
\hspace{2,5cm}\ldots, B_{k_m}, A_m\{B_{k_m+1},\ldots\},\ldots, B_n\}\,,
\end{array}
\end{equation}
where $\kappa=\sum_{i=1}^m|A_i|\sum_{j=1}^{k_i}|B_j|$.
Here summation is over all shuffles of the $A$'s and $B$'s (i.e., 
the order of elements in either group is preserved under permutations) and
the case of empty braces  $A_k\{\varnothing\}$ is not excluded.

In \cite{Getzler93}, Getzler shows that every $A_\infty$-structure $m$ on $V$ induces an $A_\infty$-structure $M$ on $C^\bullet(V)$. The latter is given by
\begin{equation}\label{M}
\begin{array}{l}
   M_0(\varnothing)=0\,,\\[2mm]
   M_1(A)=m\circ A-(-1)^{|A|}A\circ m \,,\\[3mm]
M_k(A_1,\ldots,A_k)=m\{A_1,\ldots,A_k\}\,,\qquad k>1\,.
   \end{array}
\end{equation}
Indeed, applying  (\ref{pre-J}) one finds 
\begin{equation}\label{BI}
    (M\circ M)(A_1,\ldots, A_n)=(m\circ m) \{A_1,\ldots, A_n\}=0\,.
\end{equation}
Notice that 
\begin{equation}\label{M1}
    \delta A= [m,A]=M_1(A)\,.
\end{equation}

\begin{proposition} \label{p32}
The dg-Lie algebra $(C^\bullet(V), \delta, [\;\,,\;])$ is an $hG$-algebra 
for the cup product 
\begin{equation}\label{CP}
A\cup B=(-1)^{|A|-1}M_2(A,B)\,.
\end{equation}
\end{proposition}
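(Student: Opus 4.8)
The plan is to read off the seven axioms of Definition \ref{D41} from three separate sources of structure on $C^\bullet(V)$: the graded Lie algebra $(C^\bullet(V),[\,,\,])$ of (\ref{GB}), the induced $A_\infty$-structure $M$ of (\ref{M}), and the full brace algebra of Definition \ref{D61}. Three of the axioms are purely Lie-theoretic and hold on the nose, so I would dispose of them first. Since the Gerstenhaber bracket is graded antisymmetric, $[A,B]=-(-1)^{|A||B|}[B,A]$ is immediate from (\ref{GB}), giving (hG5) strictly; the graded Jacobi identity for (\ref{GB}) is exactly (hG6), again strictly; and because $\delta=\mathrm{ad}_m=[m,-]$ by (\ref{M1}) with $|m|=1$, the Jacobi identity yields $[m,[A,B]]=[[m,A],B]+(-1)^{|A|}[A,[m,B]]$, i.e. $\delta[A,B]=[\delta A,B]+(-1)^{|A|}[A,\delta B]$, which is (hG2). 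Each of (hG2), (hG5), (hG6) thus costs one line.

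Next I would obtain (hG1) and (hG3) from the fact, recorded in (\ref{BI}), that $M$ is a genuine $A_\infty$-structure on $C^\bullet(V)$. The entire discussion following (\ref{MC}) then applies verbatim with $(V,m)$ replaced by $(C^\bullet(V),M)$ and with $a\cup b=(-1)^{|a|-1}m_2(a,b)$ replaced by (\ref{CP}). In arity two the relation $M\circ M=0$ says that $M_1=\delta$ differentiates $M_2$ by the graded Leibniz rule, which is precisely (hG1); its arity-three component expresses associativity of $\cup$ up to homotopy in the form (\ref{AuH}), with homotopy $A(A,B,C)=(-1)^{|B|-1}M_3(A,B,C)$, giving (hG3). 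So (hG1) and (hG3) require nothing beyond invoking Getzler's construction (\ref{M}).

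The two remaining axioms, (hG4) and (hG7), are exactly the homotopy-Gerstenhaber relations that are \emph{not} consequences of a bare $A_\infty$-structure, and here I would use the braces and the higher pre-Jacobi identities (\ref{pre-J}) essentially. For (hG4) the homotopy is the composition product itself: set $g(A,B)=c\,A\circ B=c\,A\{B\}$ for a suitable sign $c$, a degree-$0$ bilinear operation, one less than the degree-$1$ defect. Applying (\ref{pre-J}) to $m\{A\}\{B\}$, with a single element in each brace, expands $(m\circ A)\circ B$ as a signed sum of $m\{A,B\}=M_2(A,B)$, $m\{B,A\}=M_2(B,A)$ and $m\{A\{B\}\}=m\circ(A\circ B)$. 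Re-expressing $m\circ(A\circ B)$ and the companion terms through $\delta(A\circ B)=[m,A\circ B]$, $\delta A\circ B$ and $A\circ\delta B$, one arrives at
\[
A\cup B-(-1)^{(|A|+1)(|B|+1)}B\cup A=\delta\,g(A,B)-g(\delta A,B)-(-1)^{|A|}g(A,\delta B),
\]
which is the relation $f\simeq 0$ demanded by (hG4); this is the $A_\infty$-avatar of Gerstenhaber's classical homotopy-commutativity formula for the cup product.

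For (hG7) I would proceed in the same spirit, feeding into (\ref{pre-J}) the combination measuring the failure of $[A,-]$ to be a strict derivation of $\cup$, namely $[A,B\cup C]-[A,B]\cup C-(-1)^{|A|(|B|+1)}B\cup[A,C]$. Rewriting all three terms through $m\{-,-\}$ and $\circ$ and collecting via the pre-Jacobi identity exhibits this defect as $\delta$-exact modulo $\delta A,\delta B,\delta C$, with an explicit homotopy assembled from braces such as $A\{B\}\circ C$ and $m\{A\{B\},C\}$. I expect the genuine difficulty to reside entirely in (hG4) and (hG7): the conceptual input—existence of the homotopies—is guaranteed by the brace (homotopy-$G$-algebra) structure of \cite{GV,Getzler93}, but identifying the correct homotopy operators together with their Koszul signs so that the two homotopy identities hold exactly is the delicate bookkeeping the proof must carry out.
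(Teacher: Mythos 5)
Your proposal is correct and follows essentially the same route as the paper: (hG2), (hG5), (hG6) hold strictly from the graded Lie algebra structure, (hG1) and (hG3) are the arity-two and arity-three components of $M\circ M=0$, and (hG4), (hG7) come from the pre-Jacobi identities (\ref{pre-J}), with the homotopy for (hG4) being $A\circ B$ exactly as you guess. The only slight imprecision is in (hG7), where the paper's explicit homotopy is the double brace $A\{B,C\}$ rather than the composite expressions you name, but this is precisely the sign-and-operator bookkeeping you flag as the remaining work.
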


\begin{proof}
By the definition of an $A_\infty$-structure,
\begin{equation}
    M_1\circ M_1=0\,,\qquad [M_1, M_2]=0\,,\qquad M_2\circ M_2=-[M_1,M_3]\,.
\end{equation}
The first relation identifies the operator (\ref{M1}) as differential, i.e., $\delta^2=0$. Then the  second equality amounts to  Axiom  (hG1) of Definition \ref{D41}. 
Axiom (hG3) -- associativity up to homotopy -- is equivalent to the third relation. The commutativity up to homotopy (hG4) follows from the identity
$$
\begin{array}{c}
    M_2(A,B)+(-1)^{|A||B|}M_2(B,A)\\[2mm]
= -M_1(A\circ B)+M_1(A)\circ B+(-1)^{|A|} A\circ M_1(B)\,,
\end{array}
$$
which is a consequence of the pre-Jacobi identities (\ref{pre-J}).
The same pre-Jacobi identities lead to the equality 
$$
\begin{array}{c}
[A,M_2(B, C)]-(-1)^{|A|}M_2([A,B], C) -(-1)^{|A|(|B|+1)}M_2(B, [A,C])\\[2mm]
=(-1)^{|A|} \Big( M_1(A\{B,C\}) - M_1(A)\{B,C\}\\[2mm]
-(-1)^{|A|}A\{M_1(B),C\}-(-1)^{|A|+|B|}A\{B,M_1(C)\}\Big)\,, 
\end{array}
$$
which is equivalent to Axiom (hG7).
\end{proof}

Now we would like to equip the $hG$-algebra above with an $hC$-calculus following the pattern of Example \ref{E53}. To this end, let us consider the graded associative algebra  of endomorphisms of $C_\bullet(V)$ as an $A_\infty$-algebra $\mathrm{End}(C_\bullet(V))[1]$.    

\begin{theorem}[\cite{Getzler93}]
There is a morphism from the $A_\infty$-algebra $(C^\bullet(V), M)$ to $\mathrm{End}(C_\bullet(V))[1]$. 
\end{theorem}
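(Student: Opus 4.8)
The plan is to write the morphism down explicitly through its Taylor components $f_n\colon C^\bullet(V)^{\otimes n}\to \mathrm{End}(C_\bullet(V))[1]$ and then to verify the defining equations of an $A_\infty$-morphism by reducing them, once more, to the higher pre-Jacobi identities (\ref{pre-J}). The first thing I would do is fix the $A_\infty$-structure on the target. Since $\mathrm{End}(C_\bullet(V))$ is a dg-algebra, the shifted space $\mathrm{End}(C_\bullet(V))[1]$ is an $A_\infty$-algebra whose only nonzero structure maps are $m_1^{\mathrm{End}}=\hat\partial=[\partial,-]$ and the composition product $m_2^{\mathrm{End}}$, all higher products vanishing. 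Consequently the left-hand side of each $A_\infty$-morphism relation truncates to two terms, and the relation indexed by $N$ reads, schematically,
\begin{equation*}
\hat\partial\, f_N(D_1,\ldots,D_N)+\sum_{p+q=N}\pm\, f_p(D_1,\ldots,D_p)\circ f_q(D_{p+1},\ldots,D_N)=\sum \pm\, f_\bullet(\ldots,M_j(\ldots),\ldots)\,,
\end{equation*}
the right-hand side running over all ways of applying one structure map $M_j$ of the source to a consecutive block of arguments.

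Next I would define the components. The linear term is the interior product, $f_1=i$, promoted to the $A_\infty$-setting by replacing the associative contraction of Example \ref{E53} with its brace analogue built from $m$; its degree $1$ matches degree $0$ after the shift $[1]$, so $f_1$ is a candidate chain map. For $n\ge 2$ I would define $f_n(D_1,\ldots,D_n)$ as an endomorphism of $C_\bullet(V)$ that inserts the cochains $D_1,\ldots,D_n$, in this order, into consecutive slots of a chain $a_0\otimes\cdots\otimes a_N$ by a brace-type sum, completed cyclically so as to respect the distinguished role of $a_0$ and the wrap-around term present in both $\partial$ and in the Lie derivative (\ref{LE}). In this language $f_1=i$ and the higher $f_n$ are precisely the coherent homotopies measuring the failure of $i$ to be a strict morphism for the cup product.

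With the components in place I would check the low-order relations first. The case $N=1$ is the identity $\hat\partial\, i_D=-i_{\delta D}$, i.e.\ Axiom (hC1) with $\delta=M_1$, which says $f_1$ is a chain map. The case $N=2$ is exactly Axiom (hC3), $i_D i_E\simeq i_{D\cup E}$, with $f_2$ furnishing the explicit homotopy through (\ref{CP}). For general $N$, substituting the brace definitions of $M$ (from (\ref{M})) and of the $f_n$ turns both sides of the displayed relation into brace expressions in the single element $m$ and the arguments $D_1,\ldots,D_n$, now evaluated on a chain; the equality then follows from a single application of (\ref{pre-J}), in direct parallel with the computation (\ref{BI}) that gave $M\circ M=0$.

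The step I expect to be the main obstacle is the combinatorial and sign bookkeeping forced by the cyclic completion: unlike the purely cochain-level braces, the action on chains singles out the slot $a_0$ and produces wrap-around contributions, so the brace formalism does not apply verbatim and the two sides of each relation must be matched term by term through Koszul signs. A cleaner organizing device, which I would keep in reserve, is to recast the statement coalgebraically: an $A_\infty$-morphism is the same as a morphism of the corresponding tensor coalgebras (bar constructions) intertwining their codifferentials, and the family $\{f_n\}$ assembles into a single such map whose compatibility with the two codifferentials is equivalent to the collection of relations above. This reformulation absorbs much of the sign combinatorics into the coderivation calculus and isolates the genuinely new input as the cyclic insertion rule for chains.
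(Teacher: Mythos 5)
Your plan coincides with the construction the paper presents (following Getzler): the morphism is exhibited through explicit Taylor components --- the paper's $I(A_1,\ldots,A_k)$ of Eqs.\ (\ref{IO}) and (\ref{I}), which are exactly your cyclic brace-type insertions with $I(\varnothing)=\partial$ and $I(A)=i_A$ --- and the $A_\infty$-morphism relations truncate as you say because the target is a shifted dg-algebra, reproducing (hC1) at $N=1$ and (hC3) at $N=2$. The only packaging difference is that the paper absorbs the differential $\partial=L(m)$ into the zeroth component $I(\varnothing)$ rather than fixing the dg-structure on the target beforehand, and, like you, it defers the term-by-term Koszul-sign verification of the higher relations (where the cochain-level pre-Jacobi identities do not apply verbatim because of the distinguished slot $a_0$ and the wrap-around terms) to \cite{Getzler93}.
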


By definition, the morphism in question is given by a map $$I: TC^\bullet(V)\rightarrow \mathrm{End}(C_\bullet(V))[1]$$ obeying the sequence of relations
$$
    \sum_{k=0}^n (-1)^{|A_1|+\cdots+|A_k|}I(A_1,\ldots, A_k) I(A_{k+1},\ldots, A_{n})=-(I\circ M)(A_1,\ldots, A_{n})\,.
$$
For $n=0,1,2$ we get 
$$
    I(\varnothing)I(\varnothing)=0\,, \qquad [I(\varnothing), I(A)]=-I(M_1(A))\,,
$$
$$
 [I(\varnothing), I(A,B)]+(-1)^{|A|}I(A)I(B)
$$
$$
=-I(M_1(A),B)-(-1)^{|A|}I(A, M_1(B))-I(M_2(A,B))\,.
$$
Hence we will satisfy Axiom (hC3) of Definition \ref{D51}, if set
$$
   \partial=I(\varnothing)\,, \qquad i_A=I(A)\,. 
$$
Here is an explicit expression for the morphism $I$:
\begin{equation}\label{IO}
     I(\varnothing)(a_0,\ldots, a_n)
\end{equation}
$$
=\sum_{l=1}^\infty \sum_{j=n-l+1}^n (-1)^{(|a_0|+\cdots+|a_j|)(|a_{j+1}|+\cdots+|a_n|)}(m_l(a_{j+1}, \ldots, a_n,a_0,\ldots),\ldots, a_{j})
$$
$$
+\sum_{l=1}^\infty \sum_{j=0}^{n-l}(-1)^{|a_0|+\cdots+ |a_j|} (a_0,\ldots, m_l(a_{j+1},\ldots, a_{j+l}),\ldots, a_n)
$$
($a_0$ is inside $m_l$ in the first double sum) and

\begin{equation}\label{I}
    I(A_1,\ldots, A_k)(a_0,\ldots, a_n)
\end{equation}
$$
=\sum_{l=k+1}^\infty \sum_{j_0,\ldots,j_k} (-1)^{(|a_0|+\cdots +|a_{j_0}|)(|a_{j_0+1}|+\cdots+|a_n|)+\sum_{i=1}^k |A_i|(|a_0|+\cdots +|a_{j_i}|)} 
$$
$$
\times (m_l(a_{j_0+1},\ldots, A_1(a_{j_1+1},\ldots),\ldots, A_{k}(a_{j_k+1},\ldots), \ldots,a_n,a_0,\ldots),\ldots, a_{j_0})
$$
for $k>0$. The sum is taken over all possible values of $j_0,\ldots,j_k$ such that $a_0$ is to the right of $A_k$ and both are inside $m_l$. 
Hereinafter we denote by $(a_0,\ldots, a_n)$ the chain $a_0\otimes \cdots\otimes a_n$ of $C_n(V)$. 

In order to define the Lie derivative operation, consider the map 
$$
    L : TC^\bullet(V)\rightarrow \mathrm{End}(C_\bullet(V))
$$
given by 
\begin{equation}\label{LAA}
    L(A_1,\ldots, A_k)(a_0,\ldots, a_n)=\sum_{j_1,\ldots,j_k}(-1)^{\sum_{i=1}^k |A_i|(|a_0|+\cdots +|a_{j_i}|)} 
\end{equation}
$$
\times (a_0,\ldots, A_1(a_{j_1+1},\ldots),\ldots, A_k(a_{j_k+1},\ldots),\ldots,a_n)
$$
$$
+\sum_{i=1}^k\sum_{j_1,\ldots,j_k}(-1)^{(|A_2|+\cdots+|A_i|)(|A_{i+1}|+\cdots +|A_k|)+(|a_0|+\cdots+|a_{j_1}|)(|a_{j_1+1}|+\cdots+|a_n|)} 
$$
$$
\times (-1)^{\sum_{l=2}^i|A_l|(|a_0|+\cdots+|a_{j_l}|+|a_{j_1+1}|+\cdots +|a_n|)+\sum_{l=i+1}^k|A_l|(|a_{j_1+1}|+\cdots+|a_{j_l}|)}
$$
$$
\times (A_1(a_{j_1+1},\ldots, A_{i+1}(a_{j_1+1}, \ldots), \ldots, A_k(a_{j_k+1},\ldots),\ldots,a_0,\ldots),
$$
$$
 \ldots, A_2(\ldots),\ldots, A_i(\ldots),\ldots, a_{j_1})\,.
 $$
 Summation is taken over all possible values $j_1,\ldots, j_k$ subject to the following condition: in the first single sum $a_0$ is to the  left of $A_1$, while in the second double sum $a_0$ is inside $A_1$ and to the right of $A_k$. For a single cochain $A\in C^p(V)$, Eq. (\ref{LAA}) simplifies to 
 $$
    L(A)(a_0,\ldots, a_n)=
$$
$$
\sum_{j=0}^{n-p}(-1)^{ |A|(|a_0|+\cdots +|a_{j}|)} (a_0,\ldots, A(a_{j+1},\ldots, a_{j+p}),\ldots,a_n)
$$
$$
+\!\!\!\!\!\!\sum_{j=n-p+1}^n\!\!\!\!\!(-1)^{(|a_0|+\cdots+|a_j|)(|a_{j+1}|+\cdots +|a_n|)}(A(a_{j+1},\ldots,a_0, \ldots, a_{j+p-n-1}),\ldots, a_j).
$$
Comparing the last expression with (\ref{IO}), we conclude that $\partial= L(m)$. 

In \cite{Getzler93}, Getzler proved the following identity for $I$ and $L$:
\begin{equation}
    L\circ M=IL-LI
\end{equation}
Explicitly,
$$
   \sum_{k=1}^\infty \sum_{j=0} ^{n-k}(-1)^{|A_1|+\cdots +|A_j|}I(A_1,\ldots, A_j,M_k(A_{j+1},\ldots A_{j+k}),\ldots, A_n)
$$
$$
= \sum_{j=0}^{n-1}I(A_1,\ldots, A_j) L(A_{j+1},\ldots, A_n)
$$
$$
-\sum_{j=1}^n(-1)^{|A_1|+\cdots+|A_j|}L(A_1,\ldots, A_j)I(A_{j+1},\ldots, A_n)
$$
for $n=1,2,\ldots$ In particular, 
$$
    I(\varnothing)L(A)-(-1)^{|A|}L(A)I(\varnothing)=L(M_1(A))
$$
and 
$$
\begin{array}{l}
 I(A){L}(B)-(-1)^{|A|}L(A)I(B)+[I(\varnothing), L(A,B)]\\[3mm]
 =L(M_1(A), B)+(-1)^{|A|}L(A,M_1(B))+{L}(M_2(A,B))\,.
 \end{array}
 $$
 Therefore, we can satisfy Axioms (hC2) and (hC6) by setting $
     L_A=L(A)$.
 The remaining two Axioms (hC4) and (hC5) can now be verified  directly. As shown in \cite[Lem. 2.3]{Getzler93}, the assignment  $A \mapsto L_A$ defines a hom\-omor\-phism of graded Lie algebras, that is,
 \begin{equation}\label{Lrep}
     [L_A,L_B]=L_{[A,B]}\,.
 \end{equation}
 Axiom (hC5)
 can be written as 
 $$
     (-1)^{|A|}[L_A,i_B]=i_{[A,B]} +\hat\partial T(A,B)-T (\delta A,B)-(-1)^{|A|}T(A,\delta B)\,,
 $$
or equivalently, 
\begin{equation}
    i_BL_A-(-1)^{|A|(|B|+1)}L_Ai_B=i_{[B,A]}
\end{equation}
$$
+(-1)^{|A||B|}\Big( T(\delta A,B)+(-1)^{|A|}T(A,\delta B)-\partial T(A,B)+(-1)^{|A|+|B|}T(A,B)\partial \Big),
$$
where the Gel'fand--Daletskii--Tsygan homotopy $T(A,B)$ is given by 
\begin{equation}\label{TAB}
    T(A,B)(a_0,\ldots, a_n)
\end{equation}
$$
=\sum_{j,i} (-1)^{\kappa}(A(a_{j+1},\ldots, B(a_{i+1},\ldots, a_{q+1}), \ldots, a_0,\ldots), \ldots, a_j)\,.
$$
Here $(-1)^\kappa$ is the standard Koszul sign and the sum is taken over all $i$ and $j$ such that $a_0$ is inside $A$ but to the right  of $B$. (As above, all permutations preserve the cyclic order of $a$'s.)
In a slightly different notation an explicit  expression for $T(A,B)$ was given in \cite{Getzler93} and \cite[Ex. 3.13]{DomKow}; in the former paper it is denoted by $\rho$.

The above operations of Lie derivative $L_A$ and contraction $i_A$ can further be  supplemented  with a differential $d$ to give an example of $hC$-calculus.
To this end, one adds a new element $e$ of degree $-1$ and defines the augmented vector space $\tilde V=V\oplus ke$. The $A_\infty$-structure $m$ extends to $\tilde V$ by setting 
$$
m_2(e,a)=(-1)^{|a|}m_2(a,e)=a\,, \qquad m_2(e,e)=e\,,\qquad m_k(\ldots,e,\ldots)=0 
$$
for all $k\neq 2$ and $a\in V$. The space $\bar C_n(V)$ of normalized Hochschild $n$-chains is spanned by the tensor products $a_0\otimes a_1\otimes\cdots\otimes a_n$, where $a_0\in \tilde V$ and $a_1,\ldots,a_n\in V$.
The differential $d$, being independent of a particular $A_\infty$-structure on $V$, is given  by the Connes--Rinehart operator (\ref{B}). Since the symbol $d$ is usually overloaded, we will denote this differential by $B$, which is more standard in the case of $\bar C_\bullet(V)$.  
Axioms (hC8,9) of Definition \ref{D22} take now the form
$$
    {B}^2=0\,,\qquad [B, i_A]=-L_A -\hat \partial S_A -S_{\delta A}\,,
$$
where 
$$
    S_A(a_0,\ldots,a_n)=\sum_{j\leq i} (-1)^\kappa (e, a_{j+1},\ldots, A(a_{i+1},\ldots, a_{i+p}),\ldots, a_0,\ldots, a_j)
$$
and the Koszul sign is given by
$$
\kappa={(|a_0|+\cdots+|a_j|)(|a_{j+1}|+\cdots+|a_{n}|)+|A|(|a_{j+1}|+\cdots +|a_i|)} \,.
$$
The sum is taken over all cyclic permutations of $a$'s such that $a_0$ appears to the right of $A\in C^p(V)$. 
\section{Coderivations vs. Lie derivatives}\label{S4}

Recall that, in addition to the
associative algebra structure, the space $C_\bullet (V)=TV$ carries the
structure of a coassociative  coalgebra with respect to the
(reduced) coproduct
$
\Delta: TV\rightarrow TV\otimes TV\,,
$
$$
\Delta (a_1\otimes \cdots\otimes a_n)=\sum_{i=1}^{n-1}(a_1\otimes\cdots\otimes a_i)\otimes
(a_{i+1}\otimes\cdots\otimes a_n)\,.$$
Coassociativity is expressed by the relation $(1\otimes
\Delta)\Delta=(\Delta\otimes 1)\Delta$.

A linear map $D: TV\rightarrow TV$ is  a {\it
coderivation}, if it obeys the co-Leibniz rule $$\Delta D=(D\otimes
1+1\otimes D)\Delta\,.$$ The space of coderivations is known to be
isomorphic to the space $C^\bullet(V)=\mathrm{Hom}(TV,V)$, so
that  any homomorphism $A: TV\rightarrow V$ induces a coderivation
$D_A: TV\rightarrow TV$ and vice versa:  if $A\in
C^m(V)$, then
$$
\begin{array}{rl}
D_A(a_1\otimes\cdots\otimes a_n)=\displaystyle \sum_{i=1}^{n-m+1}&(-1)^{|A|(|a_1|+\cdots+ |a_{i-1}|)}a_1\otimes\cdots\otimes a_{i-1}\\[4mm]
&\otimes A(a_i\otimes \cdots \otimes a_{i+m-1})\otimes
a_{i+m}\otimes \cdots\otimes a_n
\end{array}
$$
for $n\geq m$ and zero otherwise. This allows
one to interpret the Gerstenhaber bracket (\ref{GB}) as the
commutator of two coderivations. More precisely, the assignment $A\mapsto D_A$ defines a homomorphism from the Lie algebra  $C^\bullet (V)$ for
the Gerstenhaber bracket (\ref{GB}) to the Lie algebra of coderivations:
\begin{equation}\label{Drep}
[D_A, D_B]=D_{[A,B]}\qquad \forall A, B\in C^\bullet (V)\,.    
\end{equation}

Thus,  we have two representations of the graded Lie algebra $(C^\bullet(V), [\;\,,\;])$ in the space $C_{\bullet}(V)$: 
by Lie derivatives (\ref{Lrep}) and by coderivations (\ref{Drep}).    
Neither representation is irreducible. To describe invariant subspaces, we introduce the generator $\lambda$ of cyclic permutations,
$$
    \lambda (a_0\otimes a_1\otimes \cdots\otimes a_n)=(-1)^{|a_{0}|(|a_1|+\cdots +|a_{n}|)} (a_1\otimes\cdots\otimes a_{n}\otimes a_0)\,,
$$
and let $N=1+\lambda+\lambda^2+\cdots+\lambda^{n}$ denote the associated norm map. 
Since $\lambda^{n+1}=1$, we have
$
    N(1-\lambda)=(1-\lambda)N=0
$ on $C_n(V)$.
The operators $N$ and $1-\lambda$ intertwine the $L$- and $D$-representations of the Lie algebra $C^\bullet(V)$, that is, 
\begin{equation}\label{N}
    NL_A=D_AN\,,\qquad L_A(1-\lambda)=(1-\lambda)D_A\qquad \forall A\in C^\bullet(V)\,.
\end{equation}
It follows from these relations that the subspace $\mathrm{Im} (1-\lambda)=\ker N \subset C_\bullet(V)$ is invariant under the action of  $L_A$, while $\mathrm{Im} N= \ker (1-\lambda)\subset C_\bullet(V)$ is an invariant subspace for $D_A$. 

If $(V,m)$ is an $A_\infty$-algebra, then the coderivation $D_m$ squares to zero and makes the coassociative coalgebra $(TV, \Delta)$ into a codifferential coalgebra $(TV, \Delta, D_m )$. This allows one to define morphisms of $A_\infty$-algebras $(V,m)$ and $(V',m')$ as homomorphisms $h: TV\rightarrow TV'$ of the corresponding codifferential coalgebras. 

The same language of coderivations applies to $L_\infty$-algebras and their morphisms \cite{StasheffLada}. This time one starts with the symmetric algebra $SV$ of a graded space $V$. This is a subcoalgebra $SV\subset TV$ spanned by symmetric tensors. Upon restriction to $SV$ one gets the coproduct 
$$
\Delta (a_1\vee \cdots\vee a_n)=\sum_{i=1}^{n-1}\sum_{{}^{\sigma\in\mathrm{Sh}(i,n-i)}}\!\!\!\!(-1)^\kappa(a_{\sigma(1)}\vee\cdots\vee a_{\sigma(i)})\otimes
(a_{\sigma(i+1)}\vee\cdots\vee a_{\sigma(n)}).$$
 An $L_\infty$-structure on $V$ can now be defined as a coderivation of $(SV, \Delta)$ that has degree one and squares to zero.  Again, it is easy to see that each coderivation $D$ of $(SV,\Delta)$ (and hence, $L_\infty$-structure) defines and is defined by some element of $\mathrm{Hom}(SV,V)$. Explicitly, 
$$
    D_A(a_1\vee \cdots\vee a_n)=\!\!\sum_{{}^{\sigma\in \mathrm{Sh}(i,n-i)}}\!\! (-1)^\kappa A(a_{\sigma(1)}\vee\cdots\vee a_{\sigma(i)})\vee a_{\sigma(i+1)}\cdots\vee a_{\sigma(n)}
$$
for all $A\in \mathrm{Hom}(S^iV, V) $. The generalized Jacobi identities of Definition \ref{DL} are equivalent  to the condition $D^2_l=0$. Thus, there is a one-to-one correspondence between $L_\infty$-structures on $V$ and codifferentials on the coalgebra $(SV,\Delta)$. This makes possible to define the morphisms of $L_\infty$-algebras as the homomorphisms of the corresponding codifferential coalgebras $(SV, \Delta, D_l)$.   Theorem \ref{Th10}  provides then a functor from the category of $A_\infty$-algebras to the category of $L_\infty$-algebras, the {\it symmetrization map}.

For every $A_\infty$-algebra $(V,m)$ the operator of Lie derivative $\partial=L_m$  makes the space $C_\bullet(V)$ into a complex, which we denote by $C_\bullet(V,m)$. The  complex $C_\bullet(V, m)$ is  known as the Hochschild chain complex of the $A_\infty$-algebra $(V,m)$ with coefficients in  itself. We let denote its homology groups by $HH_\bullet(V,m)$. For general definitions of 
$A_\infty$-(bi)modules and $A_\infty$-(co)homology we refer the reader to \cite{Kadeishvili1980ONTH}, \cite[Sec. 3]{Getzler1990A}, \cite{MARKL1992141}. 
It follows from the second relation in (\ref{N}) that the subspace $\mathrm{Im}(1-\lambda)\subset C_\bullet(V)$ is a subcomplex of $C_\bullet(V, m)$. One may regard the corresponding quotient complex 
$$
    C_\bullet^{\lambda}(V,m)=C_{\bullet}(V,m)/\mathrm{Im}(1-\lambda)
$$
as an $A_\infty$ generalization of the cyclic chain complex of an associative algebra \cite{Loday}. Its homology will be denoted by $HC_\bullet^\lambda(V,m)$.
Passing to the dual space $C_\bullet (V)^\ast$,  one can also define the corresponding cochain complexes $C^\bullet(V,m)$ and $C^\bullet_\lambda (V, m)$ of an $A_\infty$-algebra $(V,m)$.  The latter  consists of cochains $S: TV\rightarrow k$ that obey the cyclicity condition $S(1-\lambda)=0$. We denote the cohomology groups of these complexes by $HH^\bullet (V,m)$ and $HC^\bullet_\lambda(V,m)$, respectively.

\begin{remark}
A word about terminology. In the above notation for the (co)chain complexes and their (co)homology groups,  we use the label $\bullet$ just to distinguish between  the spaces $TV$ and $\mathrm{Hom}(TV,k)$, i.e., the Hochschild chains and cochains. The usual simplicial degree associated with the number of factors in tensor products $a_0\otimes\cdots\otimes a_n$ is not essential for our considerations as the differential $\partial$ is not generally homogeneous relative to it. 
Since $|\partial|=1$, all the aforementioned complexes are {\it cochain} complexes with respect to the $\mathbb{Z}$-degree, though this is not indicated in  
their notation explicitly.  However, as in the case of associative algebras, we still call the elements of $C_\bullet(V,m)$ chains (cycles) and the elements of $C^\bullet(V,m)$ cochains (cocycles). 
\end{remark}

In much the same way one can  define the (co)homology theory for $L_\infty$-algebras with various coefficients. In particular, the complex $(SV, D_l)$ above computes the homology of an $L_\infty$-algebra  with trivial coefficients. For further references, we also introduce the corresponding cochain complex $(SV)^\ast$ and denote its cohomology groups by $H^\bullet(V,l)$.  A detailed discussion of $L_\infty$-modules and $L_\infty$-(co)homology can be found in 
\cite{Lada1994StronglyHL}, \cite{Reinhold}.

\section{Families of $A_\infty$-algebras and their inner deformations}

Let $(V,m)$ be formal $n$-parameter deformation of an $A_\infty$-algebra $(V,m_0)$, that is, an $A_\infty$-structure $m$ is given by a Maurer--Cartan element $m\in C^\bullet(V)[[t_1,\ldots, t_n]]$ such that $m|_{t=0}=m_0$. For more generality we  allow the deformation parameters $t_i$ to carry   {\it even} $\mathbb{Z}$-degrees. We will refer to $(V,m)$ as a {\it family of $A_\infty$-algebras}. 
According to Proposition \ref{p32}, the family $(V,m)$ gives rise to the family of $hG$-algebras ${G}_t=\big(C^\bullet(V)[[t_1,\ldots,t_n]], \delta, [\;\,,\;], \cup \big)$. 
Let us denote 
$$
m_{(i_1i_2\cdots i_k)}=\frac{\partial^k m}{\partial t_{i_1}\partial t_{i_2}\cdots \partial t_{i_k}}\in C^\bullet(V)[[t_1,\ldots,t_n]]\,.
$$
It is clear that $|m_{(i_1i_2\cdots i_k)}|={1 -|t_{i_1}|-\ldots-|t_{i_k}|}$. We will also use this notation for partial derivatives of other cochains. Differentiating the defining relation $m\circ m=0$ w.r.t. the parameters, we get
\begin{equation}\label{mm'}
[m, m_{(i)}]=\delta m_{(i)}=0\,.
\end{equation}
Hence, $m_{(i)}$ is a $\delta$-cocycle representing a cohomology class of $H^{1-|t_i|}(G_t)$. As discussed in Sec. 2, the bracket and  the cup product pass through the cohomology and make the graded space $H^\bullet(G_t)$ into a Gerstenhaber algebra. Let $\mathcal{I}=\bigoplus \mathcal{I}^p$ denote the subalgebra of $G_t$ 
generated by the cocycles $m_{(i)}$. Passing to the $\delta$-cohomology, we also define the subalgebra $H^\bullet(\mathcal{I})$ of the Gerstenhaber algebra $H^\bullet(G_t)$.
\begin{proposition}
The Gerstenhaber bracket on $G_t$ induces the trivial Lie bracket on $H^\bullet(\mathcal{I})$. 
\end{proposition}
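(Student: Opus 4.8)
The plan is to reduce the claim to its generators and then to a single second-order consequence of the Maurer--Cartan equation. By Example~\ref{E44} together with the discussion following Proposition~\ref{p32}, on passing to $\delta$-cohomology the cup product and the bracket descend to a genuine Gerstenhaber algebra structure on $H^\bullet(G_t)$; in particular the induced bracket is a biderivation of the induced cup product (Axiom~(hG7) becomes a strict identity modulo coboundaries). Since $H^\bullet(\mathcal{I})$ is generated, as a graded-commutative associative algebra under the cup product, by the classes $[m_{(i)}]$ of the cocycles $m_{(i)}$, the biderivation property lets me expand any bracket $[\alpha,\beta]$ with $\alpha,\beta\in H^\bullet(\mathcal{I})$ into a sum of terms, each carrying a factor of the form $[[m_{(i)}],[m_{(j)}]]$. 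Hence it suffices to prove that $[[m_{(i)}],[m_{(j)}]]=0$ in $H^\bullet(G_t)$, i.e. that the cochain $[m_{(i)},m_{(j)}]$ is $\delta$-exact.

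The key step is a double differentiation of the Maurer--Cartan equation. Relation~(\ref{mm'}) already records the first derivative, $\delta m_{(i)}=[m,m_{(i)}]=0$. Because the parameters $t_j$ carry even degree, $\partial_{t_j}$ acts as an ordinary sign-free derivation of the Gerstenhaber bracket, so differentiating $[m,m_{(i)}]=0$ once more yields
$$
[m_{(j)},m_{(i)}]+[m,m_{(ij)}]=0,
$$
where $m_{(ij)}=\partial^2 m/\partial t_i\partial t_j$. Thus $[m_{(j)},m_{(i)}]=-\delta m_{(ij)}$, and by graded antisymmetry (both $m_{(i)}$ and $m_{(j)}$ have odd degree $1-|t_i|$, $1-|t_j|$) the same holds for $[m_{(i)},m_{(j)}]$. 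In either order the bracket of two generators is a $\delta$-coboundary, so its cohomology class vanishes.

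Combining the two observations finishes the argument: the vanishing of $[[m_{(i)}],[m_{(j)}]]$ on generators propagates through the biderivation property to all of $H^\bullet(\mathcal{I})$. The only points demanding care are bookkeeping ones: checking that the homotopy terms in Axiom~(hG7) genuinely drop out after passing to cohomology, so that the Poisson rule is an honest identity on $H^\bullet(G_t)$, and verifying that closing $\mathcal{I}$ under the bracket at the cochain level contributes nothing new to cohomology. The latter holds because each $[m_{(i)},m_{(j)}]=-\delta m_{(ij)}$ is exact and $\delta$ differentiates $\cup$ by Axiom~(hG1), so the cup product of such a coboundary with any cocycle is again a coboundary. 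I expect the principal, though still routine, obstacle to be the consistent tracking of Koszul signs in the even-parameter differentiation rather than any conceptual difficulty.
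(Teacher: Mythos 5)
Your argument is correct and coincides with the paper's own proof: both obtain $[m_{(i)},m_{(j)}]=-[m,m_{(ij)}]=-\delta m_{(ij)}$ by differentiating the Maurer--Cartan identity twice, and both then extend the vanishing from generators to all of $H^\bullet(\mathcal{I})$ via the Leibniz/biderivation property of Axiom (hG7). No substantive difference.
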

\begin{proof}
Differentiating identity (\ref{mm'}) once again, we get 
\begin{equation}\label{6.2}
-[m_{(i)}, m_{(j)}]=[m, m_{(ij)}]=\delta m_{(ij)}\,.
\end{equation}
Hence, the bracket $[m_{(i)}, m_{(j)}]$ is a $\delta$-coboundary. By Axiom (hG7) this conclusion extends to arbitrary cup products of $m_{(i)}$'s.  

\end{proof}

Thus, the algebra $H^\bullet(\mathcal{I})$ is generated by cup products of the partial derivatives $m_{(i)}$, so that the general element of $H^\bullet(\mathcal{I})$ is represented  by the cup polynomial
\begin{equation}\label{pol}
\Psi=\sum_{l=0}^L c^{j_1\cdots j_l}m_{(j_1)}\cup m_{(j_2)}\cup\cdots \cup m_{(j_l)}\,,
\end{equation}
where $c^{j_1\cdots j_l}\in k[[t_1,\ldots,t_n]]$. Since all $t$'s are supposed to be even, the graded associative algebra $H^\bullet(\mathcal{I})[-1]$ is commutative in the usual sense and we may assume the coefficients $c^{j_1\cdots j_l}$ to be fully symmetric in permutations of indices. It should be borne in mind that the family of $A_\infty$-structures $m$ enters the polynomial (\ref{pol}) both through the partial derivatives $m_{(j)}$ and through the cup products. 

\begin{theorem}[\cite{Sharapov:2018xxx}]\label{def}
Let $m\in C^\bullet(V)[[t_1,\ldots,t_n]]$ be an $n$-parameter family of $A_\infty$-structures and let $\Psi$ be a cocycle representing an element of $\mathcal{I}^p$. Then one can define an $(n+1)$-parameter family of $A_\infty$-structures $\tilde{m}\in C^\bullet(V)[[t_0,t_1,\ldots,t_n]]$ as a unique formal solution to the differential equation 
\begin{equation}\label{tm}
\tilde{m}_{(0)}=\Psi[\tilde m]
\end{equation}
subject to the initial condition $\tilde{m}|_{t_{0}=0}=m$. Here the new formal parameter $t_{0}$ has degree $1-p$. 
\end{theorem}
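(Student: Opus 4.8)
The plan is to prove two things: that the initial value problem (\ref{tm}) admits a unique formal solution, and that this solution automatically satisfies the Maurer--Cartan equation $\tilde m\circ\tilde m=0$, so that it genuinely defines a family of $A_\infty$-structures.

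For existence and uniqueness I would solve (\ref{tm}) recursively in powers of $t_0$. Writing $\tilde m=\sum_{k\geq 0}\tilde m^{(k)}t_0^{\,k}$, the initial condition fixes $\tilde m^{(0)}=m$. The right-hand side $\Psi[\tilde m]$ is built from the derivatives $\tilde m_{(j)}$ with $j\geq 1$ and from cup products; these depend on $\tilde m$ but involve no $t_0$-derivative, so the coefficient of $t_0^{\,k}$ in $\Psi[\tilde m]$ depends polynomially on $\tilde m^{(0)},\ldots,\tilde m^{(k)}$ only. Comparing coefficients in (\ref{tm}) gives $(k+1)\tilde m^{(k+1)}=\big[\Psi[\tilde m]\big]_{t_0^{\,k}}$, which, since $\mathrm{char}\,k=0$, determines $\tilde m^{(k+1)}$ uniquely from the lower-order data. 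This produces the unique formal solution, of the correct total degree because $\tilde m_{(0)}$ and $\Psi$ both have degree $p$.

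The substance of the theorem is the second part. I would introduce the curvature $F=\tilde m\circ\tilde m$ and show it vanishes identically; since $F|_{t_0=0}=m\circ m=0$, it suffices to derive a homogeneous evolution equation for $F$ in $t_0$. Differentiating and using (\ref{tm}) together with the graded antisymmetry of the Gerstenhaber bracket gives $\partial_{t_0}F=[\tilde m_{(0)},\tilde m]=[\Psi,\tilde m]=\pm\,\tilde\delta\Psi$, where $\tilde\delta=[\tilde m,-]$ is the differential of the deformed structure. The key computation is that of $\tilde\delta\Psi$. Differentiating $F=\tilde m\circ\tilde m$ with respect to $t_j$ yields the off-shell analogue of (\ref{mm'}), namely $\tilde\delta\tilde m_{(j)}=[\tilde m,\tilde m_{(j)}]=F_{(j)}$. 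Because the Leibniz rule (hG1) holds \emph{strictly}, $\tilde\delta$ is an honest derivation of the cup product, so applying it to the cup monomials in (\ref{pol}), with any fixed bracketing, replaces, term by term, one factor $\tilde m_{(j_k)}$ by $F_{(j_k)}$. Consequently $\tilde\delta\Psi$ is a sum of cup products in which every summand carries exactly one factor of the form $F_{(j)}$; in particular it vanishes to the same $t_0$-order that $F$ does.

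This reduces the claim to a formal uniqueness argument. The relation $\partial_{t_0}F=\pm\,\tilde\delta\Psi$ is first order in $t_0$ with right-hand side linear and homogeneous in $F$ and its $t_j$-derivatives, the coefficients being cup products of the $\tilde m_{(i)}$. Expanding in $t_0$ as before, the coefficient $F^{(k+1)}$ is a linear expression in $F^{(0)},\ldots,F^{(k)}$ and their $t_j$-derivatives; since $F^{(0)}=0$, induction forces $F^{(k)}=0$ for all $k$, whence $F\equiv 0$. I expect the main obstacle to be the bookkeeping behind $\tilde\delta\Psi$: one must verify that it is the strict Leibniz rule (hG1), and not the only-up-to-homotopy associativity (hG3), that governs the computation, and keep careful track of the Koszul signs so that each resulting term genuinely carries a factor $F_{(j)}$ and nothing survives on the Maurer--Cartan locus $F=0$.
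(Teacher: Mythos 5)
Your overall strategy is the same as the paper's: solve \eqref{tm} order by order in $t_0$, set $F=\tilde m\circ\tilde m$, derive a first-order evolution equation for $F$ in $t_0$ that is linear and homogeneous in $F$ and its $t_j$-derivatives, and conclude $F\equiv 0$ from $F|_{t_0=0}=m\circ m=0$. The existence/uniqueness recursion and the final induction are fine.

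There is, however, a genuine gap in the central step, and you have in fact flagged the exact spot where it sits without resolving it correctly. You compute $\tilde\delta\Psi$ by asserting that ``the Leibniz rule (hG1) holds strictly'' for $\tilde\delta=[\tilde m,-]$ and the cup product of the deformed structure. It does not hold off-shell: Axiom (hG1) for the cup product \eqref{CP} is the statement $[M_1,M_2]=0$, which is part of the $A_\infty$-relations for $M$ and hence a consequence of $\tilde m\circ\tilde m=0$ --- precisely what you are trying to prove. Invoking it here is circular. The correct off-shell statement, which the paper derives from the brace identity \eqref{BI} (equivalently the pre-Jacobi identities), is
$$
[\tilde m, \tilde m_{(j)}\cup A]=(\tilde m\circ \tilde m)_{(j)}\cup A+\tilde m_{(j)}\cup [\tilde m, A]+(\tilde m\circ \tilde m)\{\tilde m_{(j)},A\}\,,
$$
i.e.\ the Leibniz rule acquires a curvature correction $F\{\tilde m_{(j)},A\}$. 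Consequently your description of $\tilde\delta\Psi$ as ``a sum of cup products in which every summand carries exactly one factor of the form $F_{(j)}$'' is wrong: iterating the identity over the monomials of \eqref{pol} also produces terms in which $F$ itself (not a derivative of it) sits inside a brace, namely $F\{\tilde m_{(j_1)},\cdots\}$ and $\tilde m_{(j_1)}\cup\cdots\cup F\{\tilde m_{(j_{k+1})},\cdots\}$. Fortunately these extra terms are still linear and homogeneous in $F$, so once the off-shell identity is substituted for the strict Leibniz rule your induction closes exactly as in the paper; but as written the key computation rests on a premise that is unavailable at that stage of the argument.
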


\begin{proof}\footnote{
We provide the proof because that in  \cite{Sharapov:2018xxx} contains an unfortunate misprint.} Suppose the cocycle $\Psi$ is given by a polynomial of the form (\ref{pol}). For definiteness we assume that all cup products in $\Psi$ are performed from right to left. 
Clearly, Eq.(\ref{tm}) has a unique formal solution that starts as 
$$
\tilde{m}(t_0)= m+t_0\Psi[m]+O(t^2_0)\,.
$$
Writing identity (\ref{BI}) for $m=\tilde m$,  $A_1=\tilde m_{(j)}$, and $A_2=A$ we get
\begin{equation}\label{auxi}
    [\tilde m, \tilde m_{(j)}\cup A]=(\tilde m\circ \tilde m)_{(j)}\cup A+\tilde m_{(j)}\cup [\tilde m, A]+(\tilde m\circ \tilde m)\{\tilde m_{(j)},A\}\,.
\end{equation}
Here we also used the definition (\ref{CP}). 
On the other hand, differentiating  the cochain $\tilde{m}\circ \tilde{m}$ w.r.t. $t_0$ we obtain
$$
(\tilde m\circ \tilde m)_{(0)}=[\tilde m, \tilde{m}_{(0)}]=[\tilde m, \Psi[\tilde{m}]]\,.
$$
Repeated use of identity (\ref{auxi}) allows us to bring the last equality into the form 
$$
\begin{array}{c}
(\tilde m\circ \tilde m)_{(0)}=\displaystyle 
\sum_{l=0}^L c^{j_1\cdots j_l}\left[ \sum_{k=1}^l \tilde m_{(j_1)}\cup \cdots \cup (\tilde m\circ \tilde m)_{(j_k)}\cup\cdots \cup \tilde m_{(j_l)}\right.\\[8mm]
+(\tilde m\circ\tilde m)\{\tilde m_{(j_1)}, \tilde m_{(j_2)}\cup\cdots\cup\tilde m_{j_l}\}\\[2mm]
\left.+\displaystyle \sum_{k=1}^{l-2}\tilde m_{(j_1)}\cup \cdots\cup \tilde m_{(j_{k})}\cup (\tilde m\circ \tilde m)\{\tilde m_{(j_{k+1})}, \tilde m_{(j_{k+2})}\cup\cdots\cup\tilde m_{(j_{l})}\}\right]\,.
\end{array}
$$
(Cup multiplication is performed from right  to left.) 
We see that $\tilde m\circ \tilde m$ satisfies a linear differential equation. 
With account of the initial condition $\tilde m\circ \tilde m|_{t_0=0}=m\circ m=0$ this means $\tilde m\circ \tilde m=0$. Hence, $\tilde{m}$ defines an $(n+1)$-parameter family of $A_\infty$-structures.  

\end{proof}

We call the deformations of Theorem \ref{def} the {\it inner deformations} of families of $A_\infty$-algebras. Each inner deformation of an $A_\infty$-algebra defines then a deformation of the associated $hG$-algebra $G_t$.

\begin{remark} As a small interlude let us explain in more detail some physical motivations for 
the constructions discussed above and below. Consider a collection of form-fields $\Phi^A$ of various degrees living on a space-time manifold $M$. Regarding these forms as dynamical fields, we can impose on them the equations of motion 
\begin{equation}\label{FDA}
d\Phi^A=\sum_{n\geq 2}l_{A_1\cdots A_2}^A\Phi^{A_1}\wedge\ldots\wedge\Phi^{A_n}\,,
\end{equation}
$d$ being the exterior differential. Formal integrability of the field equations (stemming from $d^2=0$) requires the $l$'s to be structure constants of some minimal $L_\infty$-algebra on the graded vector space dual to the space of form-fields $\Phi^A$. It is not hard to see that every system of PDE can be brought to the form (\ref{FDA}) for a suitable (perhaps infinite) collection of form-fields $\Phi^A$.
Since the $L_\infty$-algebra is minimal, system (\ref{FDA}) admits a consistent truncation setting $l^A_{A_1\cdots A_n}=0$ for all $n>2$. The resulting theory is completely determined by a graded Lie algebra $L$ with structure constants $l^A_{A_1A_2}$. In physical terms, one may view the truncated system as a free field theory, regarding  all the higher terms in r.h.s. of (\ref{FDA}) as interaction. Usually, the Lie algebra $L$, being determined by fundamental symmetries of the theory, is known in advance. Furthermore, in many interesting cases it comes from the commutator of an associative algebra $A$. The main problem is then to reconstruct all  higher structure constants associated with the interaction vertices in (\ref{FDA}). From the algebraic viewpoint, this amounts to deformation of the Lie algebra $L$ to a minimal $L_\infty$-algebra by adding a coherent set of higher multi-brackets. One can also do this at the level of $A_\infty$-algebras: first deform  $A$ to a minimal $A_\infty$-algebra and then apply the symmetrization map (\ref{ln}). 
It is significant that even at the free level the field equations may depend on some free parameters like masses of particles, mixing angles, etc. The machinery of inner deformations  suggests then a systematic way  to convert this dependence into higher structure maps of a minimal $A_\infty$-algebra, that is, into interaction vertices. 

Given a system of PDE, one can wounder about its characteristic cohomology \cite{TSUJISHITA19913, Bryant1995}. Under certain  assumptions \cite{Barnich:2009jy} the characteristic cohomology of  (\ref{FDA}) is generated by on-shell closed differential forms
\begin{equation}\label{J}
J=\sum_{n} J_{A_1\cdots A_n}\Phi^{A_1}\wedge \ldots \wedge \Phi^{A_n}\,,
\end{equation}
 meaning that $dJ=0$ whenever $\Phi^A$ satisfy the equations of motion (\ref{FDA}). In case $\deg J=\dim M-1$, the form $J$ defines a usual conserved current, whose integral over a closed hypersurface in $M$ gives a conserved charge; the forms (\ref{J}) of degree less than $\dim M-1$ define the so-called lower-degree conservation laws, see e.g. \cite{Anderson, Sharapov:2016sgx}.  One may check that the form (\ref{J}) is on-shell closed iff its structure constants $J_{A_1\cdots A_n}$ define and are defined by a cocycle of $L_\infty$-algebra cohomology with trivial coefficients. 
If the $L_\infty$-algebra was obtained by an inner deformation of a graded Lie algebra $L$ (or the underlying associative algebra $A$), one can try to extend this deformation to $L_\infty$-cohomology (resp. $A_\infty$-cohomology). It is this problem that we address in the rest of this paper.  

\end{remark}

\section{Inner deformations of Hochschild (co)cocycles}\label{S6}

In this section, we show that all inner deformations of $A_\infty$-algebras are accompanied by deformations of their Hochschild (co)cycles. 
Let us start with some auxiliary identities. Repeated application of Axioms (hC1-6) gives
$$
    L_{m_{(j_1)}\cup\cdots\cup m_{(j_n)}}\simeq L_{m_{(j_1)}}i_{m_{(j_2)}\cup\cdots\cup m_{(j_n)}}+i_{m_{(j_1)}}L_{m_{(j_2)}\cup\cdots\cup m_{(j_n)}}$$
    $$
    \simeq \displaystyle \sum_{k=1}^ni_{m_{(j_1)}}\cdots i_{m_{(j_{k-1})}}L_{m_{(j_k)}}i_{m_{(j_{k+1})}}\cdots i_{m_{(j_n)}}$$
    $$
    \simeq\displaystyle \sum_{k=1}^n\Big (i_{m_{(j_1)}}\cdots i_{m_{(j_{k-1})}}i_{m_{(j_{k+1})}}L_{m_{(j_k)}}i_{m_{(j_{k+2})}}\cdots i_{m_{(j_n)}}$$
    $$
    -i_{m_{(j_1)}}\cdots i_{m_{(j_{k-1})}}
    i_{[m_{(j_{k})},m_{(j_{k+1})}]}i_{m_{(j_{k+2})}}\cdots i_{m_{(j_n)}}\Big)$$
    $$
    \simeq\displaystyle \sum_{k=1}^n \Big (i_{m_{(j_1)}}\cdots i_{m_{(j_{k-1})}}i_{m_{(j_{k+1})}}\cdots i_{m_{(j_n)}}L_{m_{(j_k)}}$$
    $$
    -\displaystyle \sum_{l=1}^{n-k}[L_m, i_{m_{(j_1)}}\cdots i_{m_{(j_{k-1})}}i_{m_{(j_{k+1})}}\cdots i_{m_{(j_{k+l-1})}}i_{m_{(j_k,j_{k+l})}}i_{m_{(j_{k+l+1})}}\cdots i_{m_{(j_n)}}]\Big).
$$
Here we also used Eq. (\ref{6.2}). Replacing the monomial $m_{(j_1)}\cup\cdots\cup m_{(j_n)}$ above with a general cup polynomial (\ref{pol}) and restoring the strict equality sign,  we can write 
\begin{equation}\label{LD}
    L_{\Psi}=[L_{ m}, W] + \sum_{j=1}^nV^{j} L_{{m}_{(j)}}
\end{equation}
for some $W=W[m]\in C^\bullet(V)[[t_1,\ldots, t_n]]$ and\footnote{Recall that $c^{j_1\ldots j_l}$ are symmetric in upper indices.}
$$V^{j}=\sum_{l=1}^{L}lc^{j_1\ldots j_{l-1}j}i_{m_{(j_1)}}\cdots i_{m_{(j_{l-1})}}\,.$$ 
It is important that $[L_m,V^{j}]=0$.  

Suppose now that deformation generated by the flow (\ref{tm}) extends to deformation of 
the corresponding  Hochschild cycles, that is, for every cycle $\alpha\in C_\bullet(V, m)$ there exists a cycle $\tilde \alpha\in C_\bullet(V,\tilde m)$ such that $\tilde \alpha|_{t_0=0}=\alpha$. Differentiating the boundary $\tilde \partial \tilde \alpha=L_{\tilde m}\tilde \alpha$ w.r.t. $t_0$ we get
$$
    (\tilde \partial \tilde \alpha)_{(0)}=(L_{\tilde m}\tilde \alpha)_{(0)}=L_{\tilde m_{(0)}}\tilde \alpha+L_{\tilde m}\tilde \alpha_{(0)}=L_\Psi \tilde \alpha+L_{\tilde m}\tilde \alpha_{(0)}\,, 
$$
where $\Psi=\Psi[\tilde m]$. With account of (\ref{LD}) this is equivalent to 
\begin{equation}\label{La}
(\tilde \partial \tilde \alpha)_{(0)}=L_{\tilde m} W\tilde \alpha + WL_{\tilde m} \tilde \alpha + \sum_{j=1}^nV^j L_{m_{(j)}}\tilde \alpha+L_{\tilde m}\tilde \alpha_{(0)}
\end{equation}
for some $W=W[\tilde m]$.
The r.h.s of this equality suggests to impose the following differential equation on $\tilde \alpha$:
\begin{equation}\label{a0}
    \tilde \alpha_{(0)}=\sum_{j=1}^nV^j\tilde \alpha_{(j)} - W\tilde \alpha\,.
\end{equation}
Then Eq. (\ref{La}) takes the form
\begin{equation}\label{LW}
({\tilde\partial  }\tilde \alpha)_{(0)}=W({\tilde \partial }\tilde \alpha)+\sum_{j=1}^nV^j({\tilde\partial  }\tilde \alpha)_{(j)}\,.
\end{equation}
We see that once a chain $\tilde \alpha$ satisfies Eq. (\ref{a0}), its boundary ${\tilde \partial}\tilde \alpha$ obeys the linear homogeneous equation  (\ref{LW}). For the zero initial condition ${\tilde \partial }\tilde \alpha|_{t_0=0}=\partial \alpha=0$, this means that $\tilde \alpha$, being uniquely defined by Eq. (\ref{a0}),  is a cycle of $C_\bullet(V,{\tilde m})$. We thus arrive at the main result of this paper. 

\begin{theorem}\label{T61}
Let $\tilde m$ be an inner deformation of a family of $A_\infty$-structures $m$. Then each Hochschild cycle $\alpha$ of $C_\bullet(V,m)$ deforms to a cycle $\tilde \alpha$  of $C_\bullet(V,\tilde m)$. 
\end{theorem}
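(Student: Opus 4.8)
The plan is to construct the deformed chain $\tilde\alpha$ explicitly as the solution of a transport equation in the new parameter $t_0$, and then to show that its boundary $\tilde\partial\tilde\alpha=L_{\tilde m}\tilde\alpha$ is forced to vanish by a uniqueness argument. Concretely, I would \emph{define} $\tilde\alpha$ to be the unique formal solution of the first-order equation (\ref{a0}),
$$\tilde\alpha_{(0)}=\sum_{j=1}^n V^j\tilde\alpha_{(j)}-W\tilde\alpha\,,$$
subject to the initial condition $\tilde\alpha|_{t_0=0}=\alpha$. Since the operators $V^j$ and $W$ are built from the already-constructed family $\tilde m$ of Theorem \ref{def}, and since $\partial/\partial t_0$ occurs only on the left-hand side, this equation is a well-posed Cauchy problem in $t_0$: expanding in powers of $t_0$ turns it into a recursion that determines each coefficient of $\tilde\alpha$ from the lower-order ones, so $\tilde\alpha$ exists and is unique.

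It then remains to verify that $\tilde\alpha$ is a cycle. For this I would set $\beta=\tilde\partial\tilde\alpha=L_{\tilde m}\tilde\alpha$ and differentiate with respect to $t_0$, using $\tilde m_{(0)}=\Psi$ from the flow (\ref{tm}). This yields (\ref{La}); substituting the transport equation (\ref{a0}) for $\tilde\alpha_{(0)}$ and invoking the decomposition (\ref{LD}) of $L_\Psi$, the two occurrences of $L_{\tilde m}W\tilde\alpha$ cancel and, using $[L_{\tilde m},V^j]=0$ to move $L_{\tilde m}$ through $V^j$, the remaining terms reorganize precisely into the linear homogeneous equation (\ref{LW}), namely $\beta_{(0)}=W\beta+\sum_j V^j\beta_{(j)}$. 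Because $\alpha$ is a cycle of $C_\bullet(V,m)$ and $\tilde\partial=\partial$ at $t_0=0$, the initial datum is $\beta|_{t_0=0}=\partial\alpha=0$; by the same uniqueness of the Cauchy problem in $t_0$ this propagates to $\beta\equiv0$. Hence $\tilde\partial\tilde\alpha=0$, so $\tilde\alpha$ is a cycle of $C_\bullet(V,\tilde m)$ reducing to $\alpha$ at $t_0=0$, as required.

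The technical heart of the argument — and the step I expect to be the main obstacle — is the identity (\ref{LD}), which expresses the Lie derivative along the deformation direction as $L_\Psi=[L_m,W]+\sum_j V^j L_{m_{(j)}}$ with the crucial property $[L_m,V^j]=0$. Establishing this requires the full homotopy Cartan calculus: one commutes the interior products $i_{m_{(j_k)}}$ past one another and past the Lie derivatives using Axioms (hC3)--(hC6), and every homotopy correction produced along the way must be collected into a single commutator $[L_m,\,\cdot\,]$ so that the relation holds with strict equality. The input (\ref{6.2}), that $[m_{(i)},m_{(j)}]$ is a $\delta$-coboundary, is what allows the bracket terms $i_{[m_{(j_k)},m_{(j_{k+1})}]}$ to be absorbed in the same way. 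Once (\ref{LD}) is available with $V^j$ commuting with $L_m$, the precise form of the transport equation (\ref{a0}) is dictated by the requirement that the inhomogeneous terms in (\ref{La}) telescope into the homogeneous equation (\ref{LW}), which is exactly what makes the uniqueness argument close.
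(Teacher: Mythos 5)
Your proposal is correct and follows essentially the same route as the paper: you define $\tilde\alpha$ by the transport equation (\ref{a0}), derive the key decomposition (\ref{LD}) with $[L_{\tilde m},V^j]=0$ from the homotopy Cartan axioms and Eq.~(\ref{6.2}), and close the argument by showing $\tilde\partial\tilde\alpha$ satisfies the linear homogeneous equation (\ref{LW}) with vanishing initial condition. The cancellation of $L_{\tilde m}W\tilde\alpha$ and the commutation of $L_{\tilde m}$ past $V^j$ are exactly the steps the paper uses.
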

This means stability of Hochcshild (co)homology under inner deformations of $A_\infty$-algebras. 
Let us write down a more explicit formula for the flow (\ref{a0}) in a special case of quadratic  deformation flow (\ref{tm}), namely, 
\begin{equation}\label{m012}
    \tilde m_{(0)}=\tilde m_{(1)}\cup \tilde m_{(2)}\,,
\end{equation}
$\tilde m(0, t_1,t_2)=m(t_1,t_2)$ being a two-parameter family of $A_\infty$-structures. 
The associated flow (\ref{a0}) on chains takes then the form 
$$
    \tilde \alpha_{(0)}=i_{\tilde m_{(1)}}\tilde\alpha_{(2)}+i_{\tilde m_{(2)}}\tilde \alpha_{(1)}+\big (i_{\tilde m_{(1,2)}}+T(\tilde m_{(1)},\tilde m_{(2)})-L(\tilde m_{(1)},\tilde m_{(2)})\big)\tilde \alpha,
$$
where the endomorphisms $L$ and $T$ are defined by Eqs. (\ref{LAA}) and (\ref{TAB}). 

Since $\mathcal{I}$ consists of $\delta$-cocycles, each $\Psi\in \mathcal{I}$ generates an endomorphism of the $k[[t_1,\ldots,t_n]]$-vector space $HH_\bullet(m,V)$. At the level of chains,  the endomorphism is defined by the assignment $\alpha\mapsto i_{\Psi} \alpha$ for all $\alpha\in C_\bullet(V,m)$.   This makes the space of Hochschild homology into a module over the commutative algebra $H^\bullet(\mathcal{I})[-1]$. Generally, the algebra $H^\bullet(\mathcal{I})[-1]$ is not free, and the $\delta$-cohomology classes of some polynomials (\ref{pol}) may vanish identically, thereby defining relations in $H^\bullet(\mathcal{I})[-1]$.

\begin{theorem}\label{T62}
Every relation $[\Psi]= 0$ in  $H^\bullet(\mathcal{I})[-1]$ gives rise to an endomorphism $h_\Psi$ of the space $HH_\bullet(V,m)$ . 
\end{theorem}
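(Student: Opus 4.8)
The plan is to manufacture $h_\Psi$ from the chain-level deformation flow of Theorem~\ref{T61}, specialized to the inner deformation generated by $\Psi$ itself, and then to use the vanishing $[\Psi]=0$ to fold the deformed complex back onto the original one. First I would unwind the hypothesis: a relation $[\Psi]=0$ in $H^\bullet(\mathcal I)[-1]$ means exactly that the cup polynomial \eqref{pol} is a $\delta$-coboundary, $\Psi=\delta\Phi=[m,\Phi]$ for some $\Phi\in C^\bullet(V)[[t_1,\ldots,t_n]]$. Because Axiom (hC1) holds \emph{strictly} on the Hochschild complex, this yields $i_\Psi=-\hat\partial\, i_\Phi=-[\partial,i_\Phi]$; thus $i_\Psi$ is null-homotopic and the module action of $[\Psi]$ on $HH_\bullet(V,m)$ vanishes, as it must. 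The homotopy $i_\Phi$ is the new datum out of which the endomorphism will be built.

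Next I would invoke Theorem~\ref{def} to form the one-parameter inner deformation $\tilde m$ with $\tilde m_{(0)}=\Psi[\tilde m]$ and $\tilde m|_{t_0=0}=m$, and run the accompanying flow \eqref{a0} on chains, $\tilde\alpha_{(0)}=\sum_j V^j\tilde\alpha_{(j)}-W\tilde\alpha$. By Theorem~\ref{T61} this transports every cycle $\alpha$ of $C_\bullet(V,m)$ to a cycle $\tilde\alpha$ of $C_\bullet(V,\tilde m)$, hence induces homomorphisms $P_{t_0}\colon HH_\bullet(V,m)\to HH_\bullet(V,\tilde m)$ which are in fact isomorphisms by the stability established there. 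The point of the hypothesis is that, $\Psi$ being $\delta$-exact, the flow reads $\tilde m_{(0)}=[\tilde m,\Phi]$ and is therefore an infinitesimal gauge transformation: its solution is an $A_\infty$-structure that is $A_\infty$-isomorphic to $m$ in the coderivation language of Sec.~\ref{S4}. Consequently $HH_\bullet(V,\tilde m)$ is canonically identified with $HH_\bullet(V,m)$ by the induced coalgebra isomorphism $g_{*}$, and I would \emph{define} the sought endomorphism as $h_\Psi:=g_{*}\circ P_{t_0}$ on $HH_\bullet(V,m)$; its infinitesimal form at $t_0=0$ is read off from \eqref{a0}, corrected by the Lie-derivative term $L_\Phi$ coming from the gauge identification.

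The verification that $h_\Psi$ is well defined on homology I would carry out by the same device that proves Theorem~\ref{T61}. Differentiating $\tilde\partial\tilde\alpha$ in $t_0$ and using the decomposition \eqref{LD} together with \eqref{6.2}, the boundary $\tilde\partial\tilde\alpha$ satisfies the homogeneous linear flow \eqref{LW}; with zero initial data this forces $\tilde\partial\tilde\alpha=0$, so the transport preserves cycles, and a parallel computation shows that it carries boundaries to boundaries, so $P_{t_0}$, and hence $h_\Psi$, descends to $HH_\bullet$. Independence of the auxiliary choices must also be checked: replacing $\Phi$ by $\Phi+\xi$ with $\delta\xi=0$ changes the gauge identification by the chain map $L_\xi$, which is itself null-homotopic whenever $\xi$ is $\delta$-exact, while the homotopies hidden in (hC3)--(hC6) enter $W$ and the $V^j$ only through $\partial$-commutators and hence do not affect the class of $h_\Psi$.

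The main obstacle is precisely this propagation and well-definedness. The relation $\Psi=\delta\Phi$ is given for the structure $m$, whereas the gauge trivialization requires $\Psi[\tilde m]=[\tilde m,\Phi[\tilde m]]$ to persist along the entire $t_0$-flow; establishing that the primitive $\Phi$ can be chosen naturally in the structure, so that exactness is preserved by the flow \eqref{tm}, is the delicate step. It is here that the strictness of (hC1)--(hC2) together with the pre-Jacobi identities \eqref{pre-J}, entering through the identity \eqref{auxi} already used in Theorem~\ref{def}, will have to be combined to propagate the coboundary relation. Once this is secured, the remaining points---that $g_{*}\circ P_{t_0}$ is a $k[[t_1,\ldots,t_n]]$-linear endomorphism and that its dependence on $\Phi$ is only through inner, null-homotopic terms---are routine.
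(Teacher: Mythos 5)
Your construction hinges on a step that you yourself flag and then leave unresolved, and that step is the genuine gap: to trivialize the inner deformation generated by $\Psi$ you need the exactness $\Psi[\tilde m]=[\tilde m,\Phi[\tilde m]]$ to persist along the whole $t_0$-flow \eqref{tm}, not just at $t_0=0$. The hypothesis $[\Psi]=0$ only provides a primitive $U$ with $\Psi[m]=[m,U]$ in $C^\bullet(V)[[t_1,\ldots,t_n]]$; the flow changes both the arguments $\tilde m_{(j)}$ and the cup products in the polynomial, and nothing in the axioms guarantees that the deformed cup polynomial stays $\delta_{\tilde m}$-exact or that a primitive can be chosen coherently in $t_0$. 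Without this, the "gauge identification" $g_*$ of $HH_\bullet(V,\tilde m)$ with $HH_\bullet(V,m)$ does not exist, and $h_\Psi:=g_*\circ P_{t_0}$ is not defined. Saying that the pre-Jacobi identities "will have to be combined to propagate the coboundary relation" is a statement of the problem, not a proof.

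The larger point is that the detour through Theorem \ref{def}, the finite transport $P_{t_0}$, and an $A_\infty$-isomorphism is unnecessary: the theorem is proved by a short computation entirely at the undeformed family $m$, using only ingredients you already invoke. Write $\Psi=\delta U$, take a cycle $\alpha$ (so $L_m\alpha=0$ identically in $t$), and combine \eqref{Lrep} with the strict decomposition \eqref{LD}:
$$
0=L_{\Psi-\delta U}\alpha=[L_m,W-L_U]\alpha+\sum_{j}V^jL_{m_{(j)}}\alpha
= L_m\Big(W\alpha-L_U\alpha-\sum_j V^j\alpha_{(j)}\Big),
$$
where the last equality uses $L_{m_{(j)}}\alpha=-L_m\alpha_{(j)}$ (differentiate $L_m\alpha=0$ in $t_j$) and $[L_m,V^j]=0$. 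The chain in brackets is therefore a cycle, and $h_\Psi[\alpha]=\big[W\alpha-L_U\alpha-\sum_jV^j\alpha_{(j)}\big]$ is the endomorphism. Your infinitesimal formula "$L_\Phi$ plus the flow \eqref{a0}" is indeed this expression up to sign, so your picture of $h_\Psi$ as the derivative of a trivialized transport is the right heuristic; but as written the proposal replaces a one-line verification by an integration problem it does not solve.
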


\begin{proof}
If the polynomial (\ref{pol}) defines a relation, then $\Psi=\delta U$ for some $U\in C^\bullet(V)[[t_1,\ldots,t_n]]$. Let $\alpha$ be a cycle 
of  $C_\bullet(V,m)$.
By (\ref{Lrep}) and (\ref{LD})
$$
\begin{array}{rcl}
    0&=&L_{\Psi-\delta U}\alpha =L_\Psi\alpha-[L_m, L_U]\alpha\\[2mm]
    &=&\displaystyle [L_m, W-L_U]\alpha +\sum_{j=1}^nV^jL_{m_{(j)}}\alpha\\[2mm]
    &=&\displaystyle L_m\Big( W\alpha -L_U\alpha -\sum_{j=1}^nV^j\alpha_{(j)}\Big)\,.
\end{array}
$$
Hence, the expression in the round brackets is a cycle and we can define the desired endomorphism $h_\Psi$ by the formula
$$
h_\Psi[\alpha]=\Big[ W\alpha-L_U\alpha-\sum_{j=1}^nV^j\alpha_{(j)}\Big]\,.
$$
\end{proof}
Thus, the algebra $H(\mathcal{I})[-1]$ gives rise to a set of operations on $HH_\bullet(V,m)$, both through the action of its generators and through  relations among them.

Theorems \ref{T61} and \ref{T62} have obvious counterparts for the cochain complex $C^\bullet (V,m)$. 
In particular, if $S$ is a cocycle of  $C^\bullet(V,m)$  and the inner deformation is generated by the quadratic flow (\ref{m012}), then the flow (\ref{a0}) is replaced with the following one:
$$
    \tilde S_{(0)}=\tilde S_{(1)}i_{m_{(2)}}+S_{(2)}i_{m_{(1)}}+\tilde S\big(i_{m_{(1,2)}} +L(m_{(1)},m_{(2)})+T(m_{(2)},m_{(1)})\big)\,.
$$
It is easy to see that $\tilde S{\tilde \partial}=0$ whenever $S\partial=0$. 

\section{An application to dg-algebras}
By way of illustration, let us apply the above machinery of inner deformations to the case of dg-algebras. Recall that a dg-algebra is given by a triple 
$(A,\cdot, d)$, where $A=\bigoplus A_n$ is a $\mathbb{Z}$-graded vector space over $k$ endowed with an associative dot product and a differential $d:A_n\rightarrow A_{n-1}$. We also assume the dg-algebra $A$ to enjoy a closed trace, i.e., a  linear map $\mathrm{Tr}: A\rightarrow k$ obeying the two conditions
\begin{equation}\label{tr}
\mathrm{Tr}(ab)=(-1)^{|a||b|}\mathrm{Tr}(b a)\,,\qquad \mathrm{Tr}(da)=0\,.
\end{equation}
For our purposes, it is convenient to regard the graded space $V=A[1]$ as an $A_\infty$-algebra with $m=m_2$, where $m_2(a,b)=(-1)^{|a|-1}ab$. (There is no way to include $d$ into the $A_\infty$-structure $m$, as it has degree $-1$; notice, however, that $[m,d]=0$.) 
Extending the trace functional from $A$ to the whole tensor algebra $TA$ by zero, we can write  Rels. (\ref{tr}) as $\mathrm{Tr}L_{m}=0$ and $\mathrm{Tr}L_d=0$. Define the sequence of multi-linear maps on $V$ by setting
$$
S_0=\mathrm{Tr}\,,\qquad S_{k+1}=S_k i_d\,,\qquad  k=0,1,2,\ldots,
$$
or, more explicitly,
\begin{equation}\label{Str}
    S_k(a_0,a_1,\ldots,a_k)=\mathrm{Tr}(a_0da_1\cdots da_k)\,.
\end{equation}
We claim that all $S_k$'s are cocycles of the Hochschild complex $C^\bullet(V, m)$. Indeed, 
$$
S_k\partial =\mathrm{Tr}(i_d)^k\partial =\mathrm{Tr}L_m (i_d)^k=0\,.
$$
Here we used (hC1).
A more familiar  form of the cocycle condition above is
\begin{equation}\label{dS}
\begin{array}{l}
    \displaystyle \sum_{j=0}^{k}(-1)^{\bar a_0+\cdots+\bar a_j} S_k(a_0,\ldots,a_ja_{j+1},\ldots, a_{k+1})\\[5mm]
+(-1)^{\bar a_{k+1}(\bar a_0+\cdots +\bar a_{k}+1)} S_k(a_{k+1}a_0, a_1,\ldots, a_{k})=0\,,
\end{array}
\end{equation}
where $\bar a=|a|-1$ is the degree of $a$ as an element of $V=A[1]$. Using the explicit expression (\ref{Str}),  one can easily verify the cyclicity property 
\begin{equation}\label{cycS}
S_k(a_0,a_1,\ldots,a_k)=(-1)^{\bar a_0(\bar a_1+\cdots+\bar a_k)}S_k(a_1,\ldots,a_k,a_0)\,.
\end{equation}
 Together Rels. (\ref{dS}) and (\ref{cycS}) identify $S_k$ as a cyclic cocycle representing a class of $HC^k_\lambda(A)$. 

Suppose now that the dg-algebra $A$ admits a formal deformation $A_t=A[[t]]$, the parameter $t$ being of degree zero.  It is known that 
neither cyclic nor Hochschild (co)homology is homotopy invariant.  Nevertheless, we assume that the trace (\ref{tr}), representing a class of $HC_\lambda^0(A)=HH^0(A)$,  survives deformation, so that the cyclic cocycles (\ref{Str}) make sense for the deformed dg-algebra $A_t$ as well.

The results of the previous section suggest that a kind of homotopy invariance does take place for the Hochschild (co)homology, if we restrict ourselves to inner deformations of families. Although there is no way to define  inner deformations of $A_t$ as an associative algebra, such deformations are possible in a larger category of $A_\infty$-algebras \cite{Sharapov:2018xxx}.  The construction goes as follows. First, we introduce an auxiliary formal variable $v$ of degree two and define the dg-algebra $A_t[[v]]$, just extending the original differential and product by $k[[v]]$-linearity.  
Multiplying the differential $d$ by $v$ gives then a differential of degree one on $A_t[[v]]$. 
This allows us to regard $A_t[[v]]$ as a two-parameter family of $A_\infty$-algebras with $m=m_1+m_2$, where 
$$m_1(a)=vd a\,,\qquad   m_2(a,b)=(-1)^{\bar a}ab\,.$$ 
The corresponding algebra of inner deformations $\mathcal{I}$ is generated by the cocycles $m_{(t)}$ and $m_{(v)}=d$.  Now we define the sequence of cocycles 
\begin{equation}\label{psi}
   \Psi_0=m_{(t)}\,,\qquad  \Psi_{n+1}= d\cup \Psi_{n}\,, \qquad n=0,1,2,\ldots \,,
\end{equation}
all are of degree one. According to Theorem \ref{def}, each cocycle $\Psi_n$ gives rise to the deformation flow 
$\tilde m_{(s)}=\Psi_n[\tilde m]$ on the space of $A_\infty$-structures,
with $s$ being a new formal variable of degree zero.  Since the parameter $v$ plays an auxiliary role in our construction, we can 
finally put it to zero to get a family of $A_\infty$-structures $\bar m=\tilde{m}|_{v=0}$ on $V$ parameterized by $t$ and $s$; both the parameters are of degree zero.  The family starts as 
\begin{equation}\label{bm}
 \bar m=m_2 +s \bar \Psi_n+ O(s^2)\,,   
\end{equation}
where
$$
   \bar \Psi_n(a_0,a_1,\ldots,a_{n+1})=(-1)^{n+\bar a_n}da_0\cdots da_{n-1}(a_n\diamond a_{n+1})
$$
and $\diamond$ stands for the $t$-derivative of the dot product in $A_t$. By construction, $\bar \Psi_n$ is a Hochschild cocycle  representing a class of $HH^{n+2}(A_t,A_t)$.  If the class happens to be nontrivial, then (\ref{bm}) defines a nontrivial deformation of the associative  algebra $A_t$ in the category of $A_\infty$-algebras\footnote{For $\Psi_0$, the deformation boils down to the trivial shift  $t\mapsto t+s$.}. Notice that the resulting $A_\infty$-structure $\bar m$ is { minimal}, for $\bar m_1=0$. We denote the corresponding differential by $\bar \partial=L_{\bar m}$.

As shown in Sec. \ref{S6}, every inner deformation of an $A_\infty$-algebra extends to  Hochschild (co)cycles. 
\begin{proposition}
The inner deformation (\ref{bm}) generates the following first-order deformation of the cocycles (\ref{Str}):
\begin{equation}\label{Sk}
\bar S_k= S_k+ s(S_{k+n})_{(t)}+O(s^2)\,.
\end{equation}
\end{proposition}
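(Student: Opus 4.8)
The plan is to produce the first-order term of the cochain deformation flow accompanying (\ref{bm}) and to recognize it as $(S_{k+n})_{(t)}$. The deformed cocycle is furnished by the cochain analogue of the flow (\ref{a0}) for the generator $\Psi_n=\underbrace{d\cup\cdots\cup d}_{n}\cup\,m_{(t)}$ of (\ref{psi}); its existence is the cochain form of Theorem \ref{T61}. Running the homotopy-Cartan identities (hC1)--(hC6) on this monomial exactly as in the derivation of (\ref{LD}), now with the two parameters $t,v$ and $m_{(v)}=d$, shows that the operator accompanying $L_{m_{(t)}}$ is $V^{t}=(i_d)^{n}$ and the one accompanying $L_{d}$ is $V^{v}=n\,(i_d)^{n-1}i_{m_{(t)}}$. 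A decisive simplification is that $S_{k}=\mathrm{Tr}(i_d)^{k}$ is independent of $v$: for a single cochain the sum in (\ref{I}) starts at $l=2$, so $i_d=I(d)$ is assembled from the maps $m_l$ with $l\ge 2$ only and never sees $m_1=vd$. Hence $S_{k,(v)}=0$, the $V^{v}$-contribution drops at first order, and the first-order term of the flow, after setting $v=0$, is $\big(S_{k,(t)}(i_d)^{n}+S_{k}W\big)\big|_{v=0}$, with $W$ the homotopy operator of (\ref{LD}).

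To recognize this as a total derivative I would use $S_{k+n}=S_{k}(i_d)^{n}$ from (\ref{Str}) and check that the candidate $\bar S_{k}=S_{k}+s(S_{k+n})_{(t)}$ is $\bar\partial$-closed to first order, i.e.\ that $(S_{k+n})_{(t)}L_{m_2}+S_{k}L_{\bar\Psi_n}=0$ at $v=0$; the flow then selects this representative. Differentiating the cocycle identity $S_{k+n}L_{m_2}=0$ with respect to $t$ (it holds for all $t$, with $m_{(t)}=\partial_t m_2$) yields $(S_{k+n})_{(t)}L_{m_2}=-S_{k+n}L_{m_{(t)}}=-S_{k}(i_d)^{n}L_{m_{(t)}}$, so the closedness condition collapses to $S_{k}\big(L_{\bar\Psi_n}-(i_d)^{n}L_{m_{(t)}}\big)=0$.

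It remains to expand $L_{\bar\Psi_n}$ via (hC3) and (hC6) --- equivalently, to read it off the right-hand side of (\ref{LD}) at $v=0$ --- as $L_{\bar\Psi_n}=(i_d)^{n}L_{m_{(t)}}+n(i_d)^{n-1}i_{m_{(t)}}L_{d}+[L_{m_2},W]$, and to show the two correction terms kill $S_{k}$ on the left. For the $L_{d}$-term I would commute $L_{d}$ leftward through the contractions; this produces no internal-bracket terms because $[d,m_{(t)}]=0$, which follows from (\ref{6.2}) and $m_{(vt)}=\partial_v\partial_t m=0$, whereupon the closedness of the trace $\mathrm{Tr}L_{d}=0$ annihilates it. The commutator $[L_{m_2},W]$ is disposed of against $S_{k}L_{m_2}=0$ and $\mathrm{Tr}L_{m}=0$, using the cyclicity (\ref{cycS}). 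The main obstacle is exactly this last step: the homotopy $W$ secretly carries the Gel'fand--Daletskii--Tsygan operator $T$ of (\ref{TAB}) and the Lie derivative (\ref{LAA}), so the cancellation is a genuine bookkeeping in which the two trace conditions and cyclicity conspire to upgrade the partial derivative $S_{k,(t)}(i_d)^{n}$ into the full derivative $(S_{k+n})_{(t)}$. With this in hand, uniqueness of the flow identifies the deformed cocycle as $\bar S_{k}=S_{k}+s(S_{k+n})_{(t)}+O(s^{2})$, as claimed.
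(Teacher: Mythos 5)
Your overall strategy coincides with the paper's: check that $\bar S_k\bar\partial=O(s^2)$, which after differentiating the $t$-dependent identity $S_{k+n}L_{m_2}=0$ reduces to the single claim $S_kL_{\bar\Psi_n}=S_{k+n}L_{\bar\Psi_0}=S_k(i_d)^nL_{m_{(t)}}$. The gap is that you do not prove this claim: you reduce it to the vanishing of the homotopy corrections and then declare that step ``the main obstacle'', a ``genuine bookkeeping'' in which the trace conditions ``conspire''. That bookkeeping is the entire content of the proposition. Moreover, the route you choose makes it harder than necessary. Expanding $L_{\bar\Psi_n}$ through the global formula (\ref{LD}) drags in the composite homotopy $W$, built out of the operators (\ref{LAA}) and (\ref{TAB}), and your statement that $[L_{m_2},W]$ ``is disposed of against $S_kL_{m_2}=0$'' only kills the half $S_kL_{m_2}W$. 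The other half, $\pm S_kWL_{m_2}$, is a coboundary, not zero; if it does not vanish it shifts the first-order cocycle to $S_k+s\big((S_{k+n})_{(t)}\mp S_kW\big)$ rather than (\ref{Sk}), so you must actually prove $S_kW=0$ (or at least that this coboundary vanishes), and nothing in your argument does so. An unproved identity of the same kind is hidden in your first paragraph: since $i_d$ depends on $t$ through the product, $(S_{k+n})_{(t)}=S_{k,(t)}(i_d)^n+S_k\big((i_d)^n\big)_{(t)}$, so matching the flow's output $S_{k,(t)}(i_d)^n+S_kW$ with $(S_{k+n})_{(t)}$ is not automatic either.

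The paper sidesteps the composite $W$ entirely by peeling off one $d$ at a time, using only $L_{d\cup\bar\Psi_n}=i_dL_{\bar\Psi_n}+L_di_{\bar\Psi_n}+[\partial,L(d,\bar\Psi_n)]$, so that the single homotopy $L(d,\bar\Psi_n)$ of (\ref{Laa}) is the only one ever encountered. Then $S_kL_d=0$ and $S_k\partial=0$ dispose of $L_di_{\bar\Psi_n}$ and of the $\partial L(d,\bar\Psi_n)$ half of the commutator, and --- this is the decisive computation you are missing --- the explicit form (\ref{Laa}) shows that $S_kL(d,\bar\Psi_n)$ vanishes identically, every term reassembling into $\mathrm{Tr}\big(d(\cdots)\big)=0$ by closedness of the trace, which kills the $L(d,\bar\Psi_n)\partial$ half as well. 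Iterating gives $S_kL_{\bar\Psi_{n+1}}=S_{k+1}L_{\bar\Psi_n}$ and hence the needed identity. Your reduction is sound and your observation that $S_k$ is $v$-independent is correct, but without an explicit proof that $S_k$ annihilates the homotopy terms the argument stops exactly where the work lies.
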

\begin{proof}
Computation  is somewhat simplified due to specific properties of the  cocycles (\ref{Str}). First of all, we have 
$$
S_kL_{\bar\Psi_{n+1}}=S_kL_{d\cup \bar \Psi_n}=S_k(i_dL_{\bar \Psi_n}+L_{d}i_{\bar \Psi_n} +[\partial, L(d, \bar \Psi_n)])\,.
$$
Actually, the last two terms on the right vanish. It is immediate that
$$
S_kL_d=\mathrm{Tr}L_d(i_d)^k=0\,,\qquad S_k\partial=0\,. 
$$
The remaining term $S_kL(d,\bar\Psi_n)\partial$ of the commutator requires more work. Since $d$ is a unary map, it follows from  the definition (\ref{LAA}) that 
\begin{equation}\label{Laa}
\begin{array}{c}
    L(d,\bar\Psi_n)(a_0,\ldots, a_p)\\[3mm]
   \displaystyle  =\sum_{0\leq i<j\leq p-n-1}(-1)^\kappa(a_0,\ldots, da_i,\ldots, \bar\Psi_n(a_{j},\ldots, a_{j+n+1}),\ldots, a_p),
 \end{array}
\end{equation}
whence
$$
 S_kL(d,\bar\Psi_n)(a_0,\ldots, a_{k+n+1})
 $$
 $$
 =\mathrm{Tr}\Big( d\sum_{j=1}^k(-1)^\kappa a_0da_1\cdots  d\bar\Psi_n(a_{j},\ldots, da_{j+n+1})\cdots da_{k+n+1}\Big)=0\,.
$$
Therefore,
$$
    S_kL_{\bar \Psi_{n+1}}=S_{k+1}L_{\bar \Psi_{n}}=\ldots =S_{k+n+1}L_{\bar \Psi_0}\,.
$$
Now, verifying the cocycle condition for (\ref{Sk}), we find 
$$
\begin{array}{rcl}
    \bar S_k\bar \partial&=&\big(S_k+s(S_{k+n})_{(t)}+O(s^2)\big)\big(L_{m_2}+sL_{\bar \Psi_n}+O(s^2)\big)\\[3mm]
    &=&s\big((S_{k+n})_{(t)}L_{m_2}+S_kL_{\bar\Psi_n}\big)+O(s^2)\\[3mm]
     &=&s\big((S_{k+n})_{(t)}L_{m_2}+S_{k+n}L_{\bar\Psi_0}\big)+O(s^2)\\[3mm]
     &=&s(S_{k+n}L_{m_2})_{(t)} +O(s^2)=O(s^2)\,.
    \end{array}
$$
The proof is complete. 
\end{proof}
 Notice that the first-order deformation (\ref{Sk}) enjoys cyclicity (\ref{cycS}) as well. It is not clear whether this property holds true in higher orders in $s$. Should this be the case the deformed cocycles $\bar S_k$ would define elements of the cyclic cohomology group $HC^\bullet_\lambda(V, \bar m)$.  

As mentioned in the Introduction, inner deformations of $A_\infty$ algebras play a crucial role in construction of a new class of classical integrable theories \cite{Sharapov:2019vyd} and in applications to Chern--Simons matter theories and $3d$ bosonization duality \cite{Sharapov:2018kjz}. In this context, the most interesting are deformations of dg-algebras that are generated by the first Hochschild cocycle $\Psi_1$  of  (\ref{psi}).
Furthermore, it is demanded  that the deformed cocycles (\ref{Sk}) be cohomologous to cyclic ones to be interpreted in terms of correlators of higher-spin currents.  In the rest of this paper, we show that the inner deformation generated by $\Psi_1$ does respect cyclicity under the two assumptions: 
\begin{enumerate}
    \item[(I)] $A_t=\bigoplus_{n\in \mathbb{Z}} (A_t)_n$ is a family of unital dg-algebras with $(A_t)_n=0$ for all $n> 1$,
    \item[(II)]the differential $d: (A_t)_n\rightarrow (A_t)_{n-1}$ is independent of $t$, i.e., $d_{(t)}=0$.
\end{enumerate}
Again, this is enough for the applications mentioned above. Even with these technical restrictions we need some preparation.  

The main difficulty here is that the operator $1-\lambda$ features no good algebraic properties in $hG$-calculus and the cyclicity  of deformed cochains is hard to control. Fortunately, there is an equivalent definition of cyclic cohomology, which fits perfectly well with  $hG$-calculus. Following \cite{Getzler1990A}, \cite{Getzler93}, we introduce a new formal variable $u$ of degree $2$ and endow the space $\bar C_\bullet(V)[u]$ of polynomials in $u$ with the differential $D=\partial-uB$, where $B$ is the Connes--Rinehart operator (\ref{B}).
Regarding $\bar C_\bullet(V)[u]$ as a $k[u]$-module, we define the dual cochain complex $\bar C_\bullet(V)^\ast [u]$.
Assumption (I) above implies that the space $V$ is non-positively graded. As a consequence, the complex $\bar C_\bullet(V)^\ast[u]$ is concentrated in non-negative degrees and each $p$-cochain $c\in \bar C_\bullet(V)^\ast[u]$ is given  by a polynomial
$$
c=c_p+c_{p-2}u+\cdots +c_{p-2q}u^q\,,\qquad |c_i|=i,
$$
of degree $q\leq p/2$. We denote the cohomology of this complex by $HC_{^{[u]}}^\bullet(V,m)$.  The following isomorphism is well known:
\begin{equation}\label{iso}
    HC_{^{[u]}}^\bullet (V,m)\simeq HC^\bullet_\lambda(V,m)\,.
\end{equation}
See e.g. \cite{hood1986}, \cite{Volkov}. The isomorphism allows us to work entirely in terms of $hG$-calculus' operations. It is natural to think of $D$ as a perturbation of the Hochschild differential $\partial$. We also  introduce the operator 
$I_A=i_A-uS_A$, a perturbation of the contraction $i_A$.  Then, the following relations hold: 
$$
\begin{array}{cl}
    (1) &D^2=0\,,\\[1mm]
    (2)& [D, L_A]=L_{\delta A}\,,\\[1mm]
    (2)&uL_A=[D,I_A]+I_{\delta A}\,,\\[1mm]
    (3)&L_{A\cup B}=L_AI_B-(-1)^{|A|}I_AL_B\\[1mm]
      &\hspace{11mm}-(-1)^{|A|}[D, L(A,B)]+(-1)^{|A|}L(\delta A, B)+L(A,\delta B)\,,\\[1mm]
    (4)&(-1)^{|A|}[L_A,I_B]=I_{[A,B]}+[D,T(A,B)]-T(\delta A,B)-(-1)^{|A|}T(A,\delta B),\\[1mm]
    (5)&uL(A,B)= (-1)^{|A|-1}I_{A\cup B}+(-1)^{|A|}I_AI_B+[D,I(A,B)]\\[1mm]
    & \hspace{65mm}+I(\delta A,B)+(-1)^{|A|}I(A,\delta B)\,.
\end{array}
$$
The reader can verify these identities either  directly, using the definitions of Sec. \ref{S3}, or consult the papers \cite{Getzler93}, \cite{DomKow}.  
Comparing Rels. (1-5)  with Axioms (hC1-6), we see that the operations $D$, $L$, and $I$ do not define a precalculus because of the `curvature' terms $uL_A$ and $uL(A,B)$ in (2) and (5). Nonetheless, these relations appear to be very helpful  in practical calculations, as we will see shortly. 

Given a family of $A_\infty$-structures $m=vd+m_2$ as above, we consider the inner deformation  generated by the cocycle 
$$
\Psi_1=m_{(v)}\cup m_{(t)}=d\cup m'_2.
$$
Hereinafter the prime stands for the derivative  with respect to $t$.  
Notice that the r.h.s. of the  equation   $\tilde m_{(s)}=\Psi_1[\tilde m]$ does not depend on $v$. As a result, its formal solution has the form 
$$
    \tilde m=v d+\bar m= m +s(d\cup m'_2)+O(s^2)\,,
$$
where the cochain  $\bar m$ is independent of $v$ either. It follows  from $[\tilde{m},\tilde{m}]=0$ that $\bar m=\bar{m}(t,s)$ is a two-parameter family of minimal $A_\infty$-structures which commutes with the differential $d$, i.e.,
$$
[\bar m,\bar m]=0\,,\qquad [d, \bar m]=0\,.
$$
We are interested in  cocycles $\hat S\in \bar C_\bullet(V)^\ast[u]$ of the differential $\bar D=\bar \partial-uB$. 
\begin{proposition}\label{P72}
With notation and assumptions above the deformation flow 
\begin{equation}\label{mS}
    \bar m_{(s)}= d\cup\bar m'\,,\qquad \hat S_{(s)}= \hat S'I_d-\hat S L(d,\bar m')
\end{equation}
preserves the conditions 
\begin{equation}\label{SDL}
   \hat S\bar D=0\,,\qquad \hat SL_d=0\,.
\end{equation}
In other words, the flow maps the space of $L_d$-closed $\bar D$-cocycles into itself. 
\end{proposition}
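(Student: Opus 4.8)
The plan is to show that the two quantities $P:=\hat S\bar D$ and $Q:=\hat S L_d$ obey a coupled linear \emph{homogeneous} system of first-order differential equations in the flow parameter $s$, whose coefficients are differential operators in $t$. Since $P|_{s=0}=Q|_{s=0}=0$ by hypothesis \eqref{SDL}, matching powers of $s$ in such a system forces every coefficient to vanish, giving $P\equiv Q\equiv 0$ for all $s$. This is the exact analogue, for the $u$-deformed differential $\bar D=\bar\partial-uB$ and the auxiliary operator $L_d$, of the mechanism behind Theorem \ref{T61}, where the chain flow \eqref{a0} is arranged so that the boundary $\tilde\partial\tilde\alpha$ satisfies the homogeneous equation \eqref{LW}; the flow \eqref{mS} plays the same role here.

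First I would assemble the algebraic input. Differentiating $[\bar m,\bar m]=0$ and $[d,\bar m]=0$ in $t$ and invoking Assumption (II) ($d'=0$) yields the four vanishing brackets $\delta d=[\bar m,d]=0$, $\delta\bar m'=[\bar m,\bar m']=0$, $[\bar m',d]=0$, and (as $d$ is unary, $d\circ d=0$) $[d,d]=0$. Since $B$ and $d$ are independent of the $A_\infty$-structure, $B_{(s)}=0$ and $d_{(s)}=0$, whence $(L_d)_{(s)}=0$ and $\bar D_{(s)}=L_{\bar m_{(s)}}=L_{d\cup\bar m'}$; likewise $d'=0$ gives $\hat S'L_d=Q'$. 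Finally $\bar D'=L_{\bar m'}$, which is what allows stray terms to be reorganized into $t$-derivatives of $P$ and $Q$.

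For the first condition I would differentiate $P=\hat S\bar D$ in $s$, substitute the flow \eqref{mS} for $\hat S_{(s)}$ and $\bar D_{(s)}=L_{d\cup\bar m'}$, and expand the cup product by relation (3); the vanishing of $\delta d$ and $\delta\bar m'$ removes its correction terms, leaving $L_{d\cup\bar m'}=L_dI_{\bar m'}+I_dL_{\bar m'}+[\bar D,L(d,\bar m')]$. I would then push $\bar D$ to the left using $uL_d=[\bar D,I_d]$ (valid since $\delta d=0$) and use relation (4) with $[\bar m',d]=0$ to trade $[L_{\bar m'},I_d]$ for the Gel'fand--Daletskii--Tsygan homotopy $T(\bar m',d)$. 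After this, every term either carries a left factor $P$, $Q$, $P'$ or $Q'$, or is of the form $\hat S(\cdots)\bar D$ with a homotopy operator ($L(d,\bar m')$ or $T(\bar m',d)$) sandwiched in between. The former are manifestly homogeneous; the latter are the content of the next paragraph. The second condition is lighter: differentiating $Q=\hat S L_d$ and using $(L_d)_{(s)}=0$ leaves only $\hat S_{(s)}L_d$, and relation (4) with $A=B=d$ (legitimate because $[d,d]=0$ and $\delta d=0$) commutes $I_d$ past $L_d$; together with $\hat S'L_d=Q'$ this again reassembles $Q_{(s)}$ into a homogeneous expression in $P,Q,P',Q'$.

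The main obstacle is precisely the terms in which $\bar D$ (or $L_d$) ends up to the right of a homotopy operator, such as $\hat S\,T(\bar m',d)\bar D$ and $\hat S\,L(d,\bar m')\bar D$. Moving the differential back to the left through relations (3)--(4) is circular, since it reproduces $\bar D_{(s)}=L_{d\cup\bar m'}$; the argument must therefore establish that the net coefficient of every such $\hat S(\cdots)\bar D$ term is zero once the curvature corrections $uL_A$ and $uL(A,B)$ from relations (2) and (5) are properly accounted for. It is exactly here that the special shape of the generating cocycle $\Psi_1=d\cup\bar m'$ --- a cup product of the two $\delta$-closed, mutually commuting cochains $d$ and $\bar m'$ --- is indispensable, and that Assumptions (I)--(II) enter, Assumption (I) guaranteeing in addition that all cochains are $u$-polynomials of bounded degree so that the formal $s$-expansion is well defined. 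Verifying this cancellation is the delicate, computational heart of the proof.
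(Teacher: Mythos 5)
Your strategy is exactly the paper's: show that $P=\hat S\bar D$ and $Q=\hat S L_d$ satisfy a coupled linear homogeneous first-order system in $s$, and conclude from $P|_{s=0}=Q|_{s=0}=0$ that they vanish identically. The ingredients you list ($\delta d=\delta\bar m'=[\bar m',d]=0$, relation (3) for $L_{d\cup\bar m'}$, relation (2) in the form $uL_d=[\bar D,I_d]$, the vanishing of the Gel'fand--Daletskii--Tsygan homotopy with a unary cochain, and $\bar D'=L_{\bar m'}$) are the ones the paper uses. But the proposal stops exactly where the proof has to be finished: you declare the cancellation of terms with $\bar D$ standing to the right of a homotopy operator to be ``the delicate, computational heart'' and leave it unverified. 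Without that step you have not shown the system is homogeneous, which is the entire content of the proposition, so this is a genuine gap.

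Moreover, the difficulty is mislocated: there is no delicate cancellation and relation (5) is never needed. The only term of the feared type is $\hat S\,L(d,\bar m')\bar D$, and it is killed by design of the flow: the coefficient $-L(d,\bar m')$ in $\hat S_{(s)}$ contributes $-\hat S\,L(d,\bar m')\bar D$, which cancels the identical term inside $\hat S\,[\bar D,L(d,\bar m')]$ produced by relation (3), leaving only $\hat S\bar D\,L(d,\bar m')=P\,L(d,\bar m')$ --- already homogeneous, with no need to move $\bar D$ back to the left. The $\hat S\,T(\cdot,\cdot)\bar D$ terms you worry about never arise, because the relevant homotopy vanishes ($d$ is unary) and $[d,\bar m']=0$, so $I_d$ and $L_{\bar m'}$ commute strictly. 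The remaining bookkeeping is $\hat S'I_d\bar D=\hat S'\bar DI_d+u\hat S'L_d$, whose pieces reassemble as $(\hat S\bar D)'I_d-\hat S L_{\bar m'}I_d+u(\hat SL_d)'$, the middle term being absorbed by the $\hat S L_{\bar m'}I_d$ from relation (3); this yields $P_{(s)}=P'I_d+uQ'+Q\,I_{\bar m'}+P\,L(d,\bar m')$. For the second condition you additionally need $[L(d,\bar m'),L_d]=0$ (read off from the explicit formula for $L(A,B)$ using $[d,\bar m']=0$) in order to rewrite $\hat S\,L(d,\bar m')L_d$ as $Q\,L(d,\bar m')$; your sketch does not address this term at all.
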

\begin{proof} Differentiating the cochain $\hat S\bar D$, we  get
$$
\begin{array}{rcl}
(\hat S\bar D)_{(s)}&=&\hat S_{(s)}\bar D+\hat S \bar D_{(s)}=\hat S_{(s)}\bar D+\hat S L_{\bar m_{(s)}}=\hat S_{(s)}\bar D+\hat S L_{d\cup \bar m'}\\[3mm]
&=&\hat S_{(s)}\bar D+\hat S \big (L_d I_{\bar m'}+I_dL_{\bar m'}+[\bar D, L(d,\bar m')]\big )\,.
\end{array}
$$
Since $d$ is a unary operator, $T(d, \bar m' )=0$. Furthermore, $[d, \bar m']=[d,\bar m]'=0$, by assumption (II).  Combining these observations with the flow (\ref{mS}) for $\hat S$, we can proceed as follows 
\begin{equation}\label{SDs}
\begin{array}{rcl}
(\hat S\bar D)_{(s)}&=&\hat S_{(s)}\bar D+\hat S \big (L_d I_{\bar m'}+L_{\bar m'}I_d+[\bar D, L(d,\bar m')]\big )\\[3mm]
&=&\hat S' I_d\bar D+\hat SL_dI_{\bar m'}+\hat SL_{\bar m'}I_d+\hat S\bar D L(d,\bar m')\\[3mm]
&=&\hat S' \bar DI_d +u\hat S' L_d +\hat SL_dI_{\bar m'}+\hat SL_{\bar m'}I_d+\hat S\bar D L(d,\bar m')\\[3mm]
&=&(\hat S \bar D)'I_d +u(\hat S L_d)' +(\hat SL_d)I_{\bar m'}+(\hat S\bar D) L(d,\bar m')\,.
\end{array}
\end{equation}
In the same way we find 
$$
(\hat S L_d)_{(s)}=\hat S_{(s)}L_d=\hat S'I_dL_d-\hat SL(d,\bar m')L_d =\hat S'L_dI_d-\hat SL(d,\bar m')L_d\,.
$$
The operator $L(d,\bar m')$ acts according to  Eq. (\ref{Laa}) with $\bar \Psi_n$ replaced by $\bar m'$.  Since the differential 
$d$ commutes with $\bar m'$, one can easily see that $[L(d, \bar m'), L_d]=0$. Therefore, we can write 
\begin{equation}\label{Sds}
    (\hat S L_d)_{(s)}=(\hat SL_d)'I_d-(\hat SL_d)L(d,\bar m')\,.
\end{equation}
Taken together Eqs. (\ref{SDs}) and (\ref{Sds}) imply the invariance of the algebraic constraints (\ref{SDL}) under the flow (\ref{mS}).
\end{proof}

Returning  to the Hochschild cocycles (\ref{Str}) of the dg-algebra $A$, one can see that all of them are annihilated by the Connes--Rinehart operator, $S_kB=0$, and hence, by the differential $D=\partial-uB$. Furthermore, they are annihilated by the Lie derivative $L_d$ as well. By Proposition \ref{P72},  the deformed cocycles $\hat S_k$ define some cohomology classes of $HC_{^{[u]}}^\bullet(V,\bar m)$. It is apparent that the leading term $\bar S_k=\hat S_k|_{u=0}$   coincides with the Hochschild cocycle (\ref{Sk}) and should be cohomologous to a cyclic cocycle  of $C^\bullet_\lambda(V,\bar m)$ due to the isomorphism (\ref{iso}). In other words, the cohomology classes of $\bar S_k$ belong to the preimage of the canonical homomorphism $I: HC_\lambda^\bullet(V, \bar m)\rightarrow HH^\bullet(V, \bar m)$. 
Let us denote a cyclic representative of the class $[\bar S_k]\in HC^\bullet_\lambda(V,\bar m)$ by $\bar S^\lambda_k$. By definition, $\bar S^\lambda_k(1-\lambda)=0$.

Now let $\bar l$ be the $L_\infty$-structure on $V$ associated with the $A_\infty$-structure $\bar m$  according to Theorem \ref{Th10}. 
It is easy to see that the  symmetrization map 
\begin{equation}\label{symm}
(\vartheta S) (a_0\vee\ldots\vee a_n)=\sum_{\sigma\in S_n}(-1)^\kappa S(a_0,  a_{\sigma(1)}, \ldots, a_{\sigma(n)})
\end{equation}
commutes with the differentials $D_{\bar l}$ in $SV$ and $\bar\partial$ in $C_\bullet^\lambda(V, \bar m)$, inducing thus a homomorphism in cohomology:
$$
\vartheta^\ast:  HH_\lambda^\bullet(V,\bar m)\rightarrow H^\bullet(V,\bar l)\,.
$$
Notice that the map $\vartheta$ is well-defined as the cochain $S$ enjoys cyclicity.  

Applying the symmetrization map (\ref{symm}) to the cyclic cocycles $\bar S^\lambda_k$ yields the sequence of $D_{\bar l}$-cocycles  $ \Phi_k=\vartheta (\bar S^\lambda_k)$, which represent some classes of $L_\infty$-cohomology $H^\bullet (V,\bar l)$, see Sec. \ref{S4}. 
By definition, each cocycle $\Phi_k$ is given by a collection of multi-linear maps  $\Phi_k^n: S^{n+1}V\rightarrow k$ with $n\geq k$. The functions $\Phi_k^n(a_0,\ldots, a_n)$, being  symmetric, are completely specified by their values  $\Phi_k^n(a,\ldots,a)$ on coinciding arguments $a\in V$. This allows us to encode the cocycle $\Phi_k=\{\Phi_k^n\}_{n=k}^\infty$ by a single function on $V$:
\begin{equation}\label{FF}
\Phi_k(a)=\sum_{n=k}^\infty \Phi_k^n(\underbrace{a,\ldots, a}_{n+1})\,.
\end{equation}
Similarly, one can define  a generating function for the $L_\infty$-structure $\bar l$. One just regards $V$ as a graded manifold and endow it with the homological vector field $Q$ whose action on `coordinate functions' $a\in V$ is given by 
$$
Qa=\sum_{n=2}^\infty \bar l_n(\underbrace{a,\ldots,a}_{n})\,.
$$
In this geometric language the cocycle condition $\Phi_k D_{\bar l}=0$ amounts to the $Q$-invariance of the function (\ref{FF}), i.e., $Q\Phi_k=0$. 

We conclude this paper by presenting an example where the deformation flow $\Psi_1$ for the cocycles $\Phi_k$ can be integrated explicitly.  Let $A=A_0\oplus A_1$ be a one-parameter family of dg-algebras with trace. Suppose that the differential $d: A_1\rightarrow A_0$ defines an isomorphism of vector spaces and the trace $\mathrm{Tr}: A\rightarrow k$ vanishes on $A_0$.  As is shown in \cite{Gerasimenko2022SlightlyBH}, the cocycles (\ref{FF}) are given then by the formula  
\begin{equation}\label{Fa}
\Phi_k(a) =\sum_{n=0}^\infty \frac {1}{k+n+1} \mathrm{Tr}(a\underbrace{ da \cdots da}_{k+n})\Big\rfloor_{s^n} \quad\qquad \forall a\in A_1\,,
\end{equation}
cf. Eq.(\ref{Sk}). In this expression, the parameter $t$, which enters both the trace and the product,  is replaced with $t+s$ and the symbol  $\big\rfloor_{s^n}$ instructs one to take only the terms that are proportional to $s^n$.  We can also assemble all $\Phi_k(a)$ into a single generating function $\Phi(a)=\sum_{k\geq 0} \nu^{-k-1}\Phi_k(a)$ with an auxiliary parameter $\nu$ to find 
\begin{equation}\label{FaB}
\Phi(a) =\mathrm{P.p.} \,\mathrm{Tr}\big (d^{-1}\ln (1-\nu^{-1} da)\big)\Big|_{t\rightarrow t+s \nu}\,.
\end{equation}
Here $\mathrm {P.p.} $ stands for the principal part of the Laurent series in $\nu$. 

The existence of compact expression (\ref{FaB}) looks quite surprising because no simple formula is known for the deformed $A_\infty$-structure $\bar m$. Note also a striking similarity of (\ref{FaB}) with  the partition function of a one-loop exact QFT. In the applications to Chern--Simons matter vector models and three-dimensional bosonization duality, one identifies $V$ with the space of conformally invariant tensor currents \cite{Gerasimenko2022SlightlyBH}.  
Conceivably all these might point to a deeper relationship of the cocycles (\ref{Fa}) with one-loop exact QFT models yet to be found.

\section*{Acknowledgements}  
We are grateful to Isaac Newton for his profound remark ``Data aequatione quotcunque fluentes quantitae involvente fluxiones invenire et vice versa''.
The work of A. Sh. was supported by the Ministry of Science and Higher Education of the Russian Federation (project No. FSWM-2020-0033) and by  the Foundation for the Advancement of Theoretical Physics and Mathematics ``BASIS''.
The work of E.S. has received funding from the European Research Council (ERC) under the European Union’s Horizon 2020 research and innovation programme (grant agreement No. 101002551). 

\providecommand{\href}[2]{#2}\begingroup\raggedright\endgroup

\end{document}